\def \V#1{{\bm #1}}
\def \be {\begin{equation}}
\def \ee {\end{equation}}
\def \bml{\begin{multline}}
\def \eml{\end{multline}}
\newcommand{\Exp}[1]{\,\mathrm{e}^{\mbox{\footnotesize$#1$}}}
\newcommand{\I}{\mathrm{i}}
\newcommand{\cket}[1]{|#1)}
\newcommand{\cbra}[1]{(#1|}
\newcommand{\tr}[1]{\mathrm{tr}\{#1\}}
\newcommand{\Tr}[1]{\mathrm{Tr}\Big\{#1\Big\}}
\newcommand{\Det}[1]{\mathrm{Det}\{#1\}}
\def \del{\partial}
\renewcommand{\openone}{\mathbb{I}}
\def \cA{{\cal A}}
\def \sofctwo{{\cal S}({\mathbb C}^2)}
\def \cB{{\cal B}}
\def \cE{{\cal E}}
\def \cJ{{\cal J}}
\def \PsetX{{\cal P}(\cX)}
\def \cH{{\cal H}}
\def \cP{{\cal P}}
\def \cS{{\cal S}}
\def \cX{{\cal X}}
\def \sofh{{\cal S}({\cal H})}
\def \bbr{{\mathbb R}}
\def \bbc{{\mathbb C}}
\def \bbn{{\mathbb N}}
\def \ds{\displaystyle}
\def \e{\mathsf e}
\def \p{\mathsf p}
\def \u{\mathsf u}
\def \Lra{\Leftrightarrow}
\def \sld{\mathrm{SLD}}
\newtheorem{theorem}{Theorem}[section]
\newtheorem{lemma}[theorem]{Lemma}
\newtheorem{proposition}[theorem]{Proposition}
\newtheorem{corollary}[theorem]{Corollary}
\newenvironment{proof}[1][Proof:]{\begin{trivlist}
\item[\hskip \labelsep {\bfseries #1}]}{\end{trivlist}}
\newcommand{\qed}{\nobreak \ifvmode \relax \else
      \ifdim\lastskip<1.5em \hskip-\lastskip
      \hskip1.5em plus0em minus0.5em \fi \nobreak
      \vrule height0.75em width0.5em depth0.25em\fi}
\begin{document}
%
%


\title{Quantum-state estimation problem via optimal design of experiments}
\author{Jun Suzuki}
\date{\today}
\email{junsuzuki@uec.ac.jp}
\affiliation{
Graduate School of Informatics and Engineering, The University of Electro-Communications,\\
1-5-1 Chofugaoka, Chofu-shi, Tokyo, 182-8585 Japan
}

\begin{abstract}
In this paper, we study the quantum-state estimation problem in the framework of 
optimal design of experiments. 
We first find the optimal designs about arbitrary qubit models for popular optimality criteria such as $A$-, $D$-, and $E$-optimal designs. 
We also give the one-parameter family of optimality criteria which includes these criteria. 
We then extend a classical result in the design problem, 
the Kiefer-Wolfowitz theorem, to a qubit system showing the $D$-optimal design is equivalent to a certain type of the $A$-optimal design. 
We next compare and analyze several optimal designs based on the efficiency. 
We explicitly demonstrate that an optimal design for a certain criterion can be highly inefficient for other optimality criteria. 
\end{abstract}

\keywords{Quantum-state estimation; design of experiments; optimality criteria}

\maketitle

\section{Introduction}\label{sec:intro}
Studies on any experiment governed by the law of statistics consists of two different stages. 
The first step is to design or prepare a good experimental setup to extract information of interest. 
The second one is to analyze an actual datum obtained from the chosen experiment. 
The standard textbooks on statistics focus only on the latter part; 
that is, how to extract a quantity of interest from a given datum. 
The first element, known as optimal design of experiments (DoE), is a well-established branch of  classical statistics.\cite{fisher1960design,fedorov,pukelsheim} 
It provides a systematic and powerful tool to search for an optimal DoE under a given optimality criterion. 

Quantum state estimation problems\cite{helstrom,holevo,QSEbook,hayashi2016quantum,petz,teo2016} are naturally divided into two stages, since measurement outcomes obey the statistical rule by the axioms of quantum theory. It seems, however, that textbooks on the subject do not emphasize this point clearly nor analyze the problem at hand in the language of optimal DoE. 
Although several authors applied this theory to a quantum system,
\cite{kosut2004optimal,nunn2010optimal,ballo2012convex,ruppert2012optimal,stm12} 
its use has been limited so far. 
One important message of this paper is that so called ``incompatibility" of estimating two different parameters is already well-known phenomena in the classical theory of optimal DoE. 
Therefore, we cannot immediately attribute this kind of trade-off relations to quantum nature of the problem. 

In a recent paper~\cite{gns19}, we developed the general theory to estimate a family of quantum channels based on the theory of optimal DoE. 
We made an explicit comment there that quantum-state estimation problems can be handled as a special case. 
The aim of this paper is two folds. 
First, we provide a framework of optimal DoE for the problem of quantum state estimation. 
We then apply the standard methodology of characterizing an optimal design. 
We show that the qubit case is completely solved as an optimal DoE problem. 
Second, we wish to compare different optimal designs for a qubit system. 
Thereby, we explicitly demonstrate that a particular optimal design is not optimal for others. 

In the classical theory of DoE, it seems that a systematic comparison of different optimal designs is not a common subject. Rather, they focus on analyzing a problem at hand based on a particularly chosen optimality criterion. 
In this study, we emphasize that a proper comparison among different optimality criteria is necessary rather than adopting one particular optimality criterion. 
This is because there is no universally accepted optimality criterion exists, but it is very subjective. 
Second reason is that one particular optimal design may become inefficient for the other optimality criteria. 
Indeed, our study suggests that one of the common optimal criteria, the $D$-optimality, in classical statistics may not be suited for quantum-state estimation problems. 
This is based on the result of the general qubit model in the tomographic scenario. 
Other optimal designs are shown to perform very poorly for the $D$-optimal criterion when the purity of quantum states is high. 

The outline of this paper is as follows. 
In Sec.~\ref{sec2}, a brief summary on the classical theory of optimal DoE is given. 
We then apply it to the problem on quantum-state estimation in Sec.~\ref{sec3}. 
We also analytically solve common optimal design for the general qubit system. 
In Sec.~\ref{sec4}, we derive a quantum version of the equivalent theorem in the qubit system. 
Section \ref{sec5} studies comparisons of different optimality criteria. 
We close our paper by conclusions and remarks in Sec.~\ref{sec6}. 

\section{Preliminaries}\label{sec2}
In this section, we provide a brief summary of optimal DoE base on non-linear response theory\cite{fedorov,fh97,pp13,fl14} 
Our formulation is based on the result presented in Ref.~\cite{gns19}. 

\subsection{Formulation}\label{sec:formulation}
\subsubsection{Terminologies and definitions}
Suppose a physical system of interest is specified by a state $s$, 
and we denote a set of all possible states by $\cS$. 
We call the set $\cS$ as the state space. 
Typically, $\cS$ is a subspace of a vector space, which could be real or complex in general. 
Denote by ${\bm \theta}=(\theta_1,\theta_2,\dots,\theta_n)$ an $n$-parameter coordinate system 
for the state space, called a {\it model parameter}, or simply {\it parameter}, to describe a state by $s_{\bm \theta}$. 
Our interest is to analyze a family of states $\{s_{\bm \theta}\,|\, {\bm \theta}\in\Theta\}$,  
where the parameter ${\bm \theta}=(\theta_i)$ takes values in $\Theta$, 
which is an open subset of $\bbr^n$. 
In the following, we assume that ${\bm \theta}\mapsto s_{\bm \theta}$ is one-to-one and smooth in $\theta$. 
Therefore, the model parameter ${\bm \theta}$ identifies the state $s_{\bm \theta}$ uniquely. 
A {\it design} $\e$ describes a particular experimental setup, 
and $\cE$ denotes the set of all possible designs, which will be called a {\it design space}. 
A {\it model function} $f$ is a mapping from $\cS\times\cE$ to a set of probability distributions on $\cX (=: \PsetX$).  
That is, $f:\, (s,\e)\mapsto p_s(\cdot|\e)\in\PsetX$ where $\forall x\in\cX$, $p_s(x|\e)\ge0$ and 
$\sum_{x\in\cX}p_s(x|\e)=1$ hold due to the axiom of probability theory. 
A familiar example of this kind is a linear regression model that has been intensively studied in the field of 
optimal DoE.\cite{pukelsheim,fh97,pp13,fl14} 

One of the main objectives of optimal DoE is to infer the unknown parameter ${\bm \theta}$ 
and hence $s_{\bm \theta}\in\cS$ by choosing an appropriate design $\e$. 
A difference from the usual setting of classical statistics is that 
a statistical model is specified by the conditional distribution according to a chosen design $\e$:  
\begin{equation*}
M(\e)= \{p_{\bm \theta}(\cdot|\e)\,|\, {\bm \theta}\in\Theta\},
\end{equation*} 
where $p_{\bm \theta}(\cdot|\e)=p_{s_{\bm \theta}}(\cdot|\e)$ is a shorthand convention. 
And thus, a datum $X$ is a random variable drawn according to $p_{\bm \theta}(\cdot|\e)$, which depends 
on a particular choice of design $\e$. 
Denoting the conditional expectation value by  $E_{\bm \theta}[X|\e]:= \sum_{x\in\cX} x\, p_{\bm \theta}(x|\e) $, 
the mean-square error (MSE) matrix for the estimator $\hat{{\bm \theta}}:\cX\to\Theta$ is defined by
\begin{equation*}
V_{\bm \theta}[\hat{{\bm \theta}}|\e]:=\Big[ E_{\bm \theta}\big[(\hat{\theta}_i(X)-\theta_i) (\hat{\theta}_j(X)-\theta_j)\big|\e\big] \Big]_{i,j}.
\end{equation*}
We look for the best estimator and design under the condition of 
locally unbiasedness defined as follows. 
An estimator $\hat{{\bm \theta}}=(\hat{\theta}_i)$ is said to be {\it locally unbiased at ${\bm \theta}$} under a design $\e$, 
if $E_{\bm \theta}[\hat{\theta}_i(X)|\e]=\theta_i$ and $\frac{\del}{\del\theta_j}E_{\bm \theta}[\hat{\theta}_i(X)|\e]=\delta_{ij}$ 
are satisfied for $\forall i,j$ at ${\bm \theta}$. 
When an estimator is locally unbiased at all points ${\bm \theta}$, it is an unbiased estimator. 
As explained later, locally unbiasedness for $\hat{{\bm \theta}}$ under the design $\e$ is 
fundamental in the theory of DoE. 
This is in contrast to the standard parameter estimation problem, where locally unbiased estimators 
are of no practical importance in general.

For a fixed design $\e\in\cE$, and assume that the model $M(\e)$ satisfies a certain regularity conditions. 
We can then apply the Cram\'er-Rao (CR) inequality, 
\begin{equation*}
V_{\bm \theta}[\hat{{\bm \theta}}|\e]\ge \Big( J_{\bm \theta}[\e]\Big)^{-1}, 
\end{equation*}
which holds for any locally unbiased estimator. 
Here $J_{\bm \theta}[\e]$ is the {\it Fisher information matrix} about the statistical model $M(\e)$ at ${\bm \theta}$, 
which is defined by 
\begin{equation*}
J_{\bm \theta}[\e]:= \Big[ E_{\bm \theta}\big[ \frac{\del \ell_{\bm \theta}(X|\e)}{\del\theta_i}  \frac{\del\ell_{\bm \theta}(X|\e)}{\del\theta_j} \big|\e \big] \Big]_{i,j}, 
\end{equation*}
where $\ell_{\bm \theta}(x|\e):=\log p_{\bm \theta}(x|\e) $ is the logarithmic likelihood function. 
Note that a locally unbiased estimator always exists at each point, 
and thus the right hand side of the CR inequality provides the fundamental limit for the MSE matrix 
at a given point. 
With this fact, we aim to minimize the inverse of the Fisher information matrix over all possible designs. 

In passing we make remarks on the figure of merit formulated in this paper and other formulations. 
It is also possible to minimize other quantities related to estimation errors, such as the fidelity, trace distance, and quantum relative entropy between an estimated state and the true state. 
The current paper focuses on the design problem before actual experiments are performed. 
Thus, it is more natural to maximize information about an unknown state as possible, which is measured by the classical Fisher information. 
In particular, we are interested in the best experimental design at each point. 
This is to say that an optimal design here is optimal locally.  
By contrast, we can also investigate optimal designs on average over possible states with some prior distribution on the state space. 
This is called the Bayesian design problem.\cite{pukelsheim,chaloner1995bayesian,dasgupta199629,ryan2016review}  

Another formulation is to find the optimal design for the worst case. 
This is known as the min-max design problem. 
Extension of the present work to these setting shall be presented elsewhere. 
Finally, a recent paper \cite{lu2020generalized} studied a family of precision bounds for a function of the MSE matrix using the concept of weighted $f$-mean, which is known in the positive matrix theory. 
While we have similar optimization problems, the current paper is based on the standard statistical tool rather than purely mathematical ones. 

\subsubsection{Optimality criteria}
To proceed further, we consider different types of optimal designs 
defined by each optimality function. 
Let $\Psi$ be a real-valued function of non-negative matrix, called an {\it optimality function}, 
such that $\Psi(A)\ge0 $ for all $A\ge0$. We can then formulate our optimization problem as follows \footnote{In this paper, we use loose language in mathematics. When the minimum does not exist, it is defined by the infimum.}.  
\begin{align}\nonumber
\Psi_*&:=\min_{\e\in\cE} \Psi\Big(J_{\bm \theta}[\e]\Big),\\  \label{def:opt_design0}
\e_*&:=\mathrm{arg}\min_{\e\in\cE} \Psi\Big(J_{\bm \theta}[\e]\Big).
\end{align} 
We call this optimal design $\e_*$ as a {\it $\Psi$-optimal design}. 
The optimality function $\Psi$ is assumed to satisfy the following three properties. \\
i) Isotonicity: For $J_1\ge J_2\ge 0$, $\Psi(J_1)\ge\Psi(J_2)$ holds.\\
ii) Homogeneity: For a constant $a>0$, $\Psi(a J)=\psi(a) \Psi(J)$ with $\psi(a)$ a non-negative function. \\
iii) Convexity: $\lambda\in[0,1]$, $J_1,J_2$, 
$\Psi(\lambda J_1+(1-\lambda)J_2)\le \lambda \Psi(J_1)+(1-\lambda) \Psi(J_2)$ holds. \\
In addition to the above three conditions, we often impose the following condition.\\ 
iv) Orthogonality invariance: 
For any orthogonal matrix $O$, $\Psi(J)=\Psi(O J O^t)$. 
That is an optimality criterion depends only on the eigenvalues of the Fisher information matrix.\\ 
The first condition is also known as an operator monotone function in matrix analysis. 
The second condition is needed to incorporate additivity property of the Fisher information matrix. 
In the following discussion, we assume that the optimality function $\Psi$ always satisfies these conditions 
[i), ii), iii)] unless stated otherwise. 

We list some of the popular criteria:
\begin{enumerate}
\item $A$-optimality: $\Psi_A(J)=\mathrm{Tr}\{J^{-1}\}$. 
\item $D$-optimality: $\Psi_D(J)=\Det{J^{-1}}$. 
\item $E$-optimality: $\Psi_E(J)=\ds\lambda_{\max}(J^{-1})$ with $\lambda_{\max}$ the maximum eigenvalue.
\item $c$-optimality: $\Psi_c(J)=\bm{c}^tJ^{-1}\bm{c}$ for a given vector $\bm{c}\in\bbr^n$. 
\item $\gamma$-optimal: $\Psi_\gamma(J)=(\frac1n\mathrm{Tr}\{J^{-\gamma}\})^{1/\gamma}$ ($\gamma\in\bbr$). 
\end{enumerate}
Here $\gamma$ is a fixed parameter, and $n=|\Theta|$ is the dimension of the parameter set $\Theta$. 
It is easy to observe that $\gamma$-optimal includes the first three optimality criteria as follows. 
First, $A$-optimality is obtained as $\Psi_A(J)=n^{\gamma}\lim_{\gamma\to1}\Psi_\gamma(J)$ when we set $\gamma=1$). Second, $D$-optimality corresponds to the limit $\gamma\to0$ as 
$\Psi_D(J)=\big(\lim_{\gamma\to0} \Psi_\gamma(J)\big)^{n}$. 
Last, $E$-optimality is related to the limit $\gamma\to\infty$, 
since $\lim_{\gamma\to0} \Psi_\gamma(J)=\lambda_{\max}(J^{-1})$. 
Here, we make a brief comment on the case of non-invertible Fisher information matrix. 
It is clear that when $J$ is not full rank, then a certain regularization is needed to invert $J$. 
In this study, we mainly focus on finding optimal designs that gives raise to a regular statistical model. 
That is to guarantee matrix inversion of the Fisher information matrix. 
(We will come back to this point in Sec.~\ref{sec:misc}.)

Other than the above optimality criterion, there is a special optimal design known as the L\"owner optimality. 
This is defined by the existence of a design $\e_L$ such that 
the matrix inequality $J_{\bm \theta}[\e_L]\ge J_{\bm \theta}[\e]$ holds for all other designs $\e$. 
In general, the L\"owner optimal design does not exists.\cite{pukelsheim} 
If so, in fact, it dominates all other $\Psi$-optimality criterion due to isotonicity of a function $\Psi$. 
It is straightforward to show that the necessary and sufficient condition for the existence of 
the L\"owner optimal design; the $c$-optimal design is independent of $\bm{c}$ for all $\bm{c}\in\bbr^n$. 
\cite{pukelsheim} 

Convexity about the design space $\cE$ is of importance to find an optimal design. 
We introduce a convex sum of two designs $\e_1,\e_2\in\cE$ is defined as 
$\e_\lambda=\lambda \e_1+(1-\lambda) \e_2\in\cE$. Then, the design space $\cE$ becomes a convex set. 
With a proper definition of convex sum of two designs, it is straightforward 
to check it preserves the locally unbiasedness condition. 
In other words, if an estimator $\hat{\bm{\theta}}$ is locally unbiased at $\bm{\theta}$ under $\e_1$ and $\e_2$, then $\hat{\bm{\theta}}$ is also locally unbiased at $\bm{\theta}$ for $\e_\lambda$. 
Convexity of the optimality function states that the inequality, 
$\Psi(\lambda J_{\bm \theta}[\e_1]+(1-\lambda)J_{\bm \theta}[\e_2])\le \lambda \Psi(J_{\bm \theta}[\e_1])+(1-\lambda) \Psi(J_{\bm \theta}[\e_2])$, 
holds for $\e_1, \e_2\in\cE$ and $\lambda\in[0,1]$.  
It is easy to check that $A$-, $E$-optimality satisfy this convexity condition. 
However, $D$-optimality violates this condition and the standard remedy 
is to optimize $\log  \Det{J_{\bm \theta}[\e]^{-1}}=- \log  \Det{J_{\bm \theta}[\e]}$ in stead, which is a convex function. 
With these additional structures, we can formulate our problem 
as a convex optimization problem over the convex set. 
This problem can then be implemented efficiently in an appropriate convex optimization algorithm. \cite{fedorov,fh97,pp13,fl14} 

\subsection{Discrete design problem}
In this subsection, we extend an estimation strategy for a situation of $N$ repetition of experiments, 
there are two distinct strategies as follows. 

\noindent
\underline{I.~i.~d.~strategy:} This strategy corresponds to repeating 
exactly same design $\e$ for $N$ times, whose design is denoted by $\e^N\in\cE^N$. 
The probability distribution for this case corresponds to independently and identically distributed (i.~i.~d.) one as 
\begin{equation*}
p_{\bm \theta}(x^N|\e^N)=\prod _{t=1}^N p_{\bm \theta}(x_t|\e).
\end{equation*}
Additivity of the Fisher information matrix applies to get the Fisher information matrix $J_{\bm \theta}[\e^N]=N J_{\bm \theta}[\e]$. 
Thus, the problem is reduced to the case $N=1$. 

\noindent
\underline{Mixed strategy:} Let $N(m)$ be an $m$-partition of an integer $N$, 
i.e., $N(m)=(n_1,n_2,\dots,n_m)$ such that $\sum_{i=1}^mn_i=N$ and $n_i\ge0$. 
The mixed strategy is to repeat a design $\e_1$ for $n_1$ times, $\e_2$ for $n_2$ times, 
$\dots$, and $\e_m$ for $n_m$ times. ($N$ experiments in total.) 
The design for this mixed strategy is specified by 
the sets $(\V{\p},\V{\e})$ where $\V{\p}=(\p_1,\dots,\p_m)$ and $\V{\e}=(\e_1,\dots,\e_m)$ are 
vectors of the relative frequency and designs, respectively. 
This mixed strategy is denoted by $\e[N(m)]=(\V{\p},\V{\e})$. 
The probability distribution for the design $\e[N(m)]$ is 
\begin{equation*}
p_{\bm \theta}(x^N|\e[N(m)])=\prod _{i=1}^{m} p_{\bm \theta}(x^{n_i}|\e_{i}^{n_i})=\prod _{i=1}^{m} \prod_{t_i=1}^{n_i}p_{\bm \theta}(x_{t_i}|\e_{i}). 
\end{equation*}
The normalized Fisher information matrix, which is divided by $N$, for the design $\e[N(m)]$ is 
\begin{equation}\label{FI:exact}
J_{\bm \theta}\big[\e[N(m)]\big]=\frac{1}{N}\sum_{i=1}^m n_i J_{\bm \theta}[\e_i]. 
\end{equation}
The problem is now to find the best partition $N(m)$ for a given $N$ 
and a set of designs $\V{\e}=(\e_1,\dots,\e_m)$ such that 
the value of the $\Psi$ optimality function $\Psi(J_{\bm \theta}\big[\e[N(m)]\big])$ is minimized. 
This {\it discrete design problem}, also known as {\it exact design problem}, 
is of importance in practice to find the best design. 
However, this is a combinatoric optimization problem, 
and it is a hard problem even numerically. 
Thus, one has to find an approximated optimal solution to the problem at hand, which will be given in the next subsection. 

\subsection{Continuous design problem}\label{sec:contDoE}
When the sample size $N$ is large enough, 
we approximate the exact design problem by taking the limit $N\to\infty$ with fixed ratios $p_i=\lim_{N\to\infty} (n_i/N)$ in Eq.~\eqref{FI:exact}. 
This optimization problem is called the {\it continuous design problem} or the {\it approximated design problem}. 
In general, the optimal continuous design is a good approximation to the exact design problem 
for sufficiently large $N$. 

The problem here is to find an optimal relative frequency $\V{\p}=(\p_i)\in\cP(m)$ 
($:=$ a set of probability vector for $m$ events) and a set of designs $\V{\e}=(\e_i)\in\cE^m$ 
such that the value of a given optimality function $\Psi(J_{\bm \theta}\big[\e(m)]\big])$ is minimized. 
Here, we denote the design of this continuous design problem by 
\begin{equation}
\e(m)=(\V{\p},\V{\e})\in\cP(m)\times\cE^m. 
\end{equation} 
The Fisher information matrix about the design $\e(m)$ takes the convex mixture of each Fisher information matrix as
\begin{equation}\label{eq:CFI_random}
J_{\bm \theta}[\e(m)]=\sum_{i=1}^m \p_iJ_{\bm \theta}[\e_i]. 
\end{equation}
We can also state that this is equivalent to the Fisher information 
about the joint probability distribution: $\p_ip_{\bm \theta}(x|\e_i)$.  
In other words, the mixed strategy is to consider a statistical model, 
\begin{equation} \label{eq:jointprob}
M\left(\e(m)\right)=\Big\{\p_i p_{\bm \theta}(\cdot|\e_i)\,\Big|\,{\bm \theta}\in\Theta\Big\},
\end{equation}
with known $\p_i$, which is also to be optimized. 

Summarizing above arguments, the following optimization problem needs to be solved: 
Given an optimality function and integer $m$, to find an optimal design 
$\e_*(m)=(\V{\p}_*,\V{\e}_*)$ defined by 
\begin{equation}\label{opt_design}
\e_*(m)=\arg\hspace{-5mm}\min_{\e(m)\in\cP(m)\times\cE^m} \Psi\Big(\sum_{i=1}^m \p_iJ_{\bm \theta}[\e_i]\Big). 
\end{equation}

A convex structure for mixed strategies is naturally constructed from 
two continuous designs $\e(m)=(\V{\p},\V{\e}), \e'(m)=(\V{\p}',\V{\e}')$ as 
\begin{equation*}
\lambda \e(m)+(1-\lambda) \e'(m)=\big(\lambda\V{\p}+(1-\lambda)\V{\p}',\, \lambda\V{\e}+(1-\lambda)\V{\e}' \big),
\end{equation*}
where $\lambda\V{\e}+(1-\lambda)\V{\e}' =\big(\lambda\e_i+(1-\lambda)\e'_i \big)$ is a well-defined convex sum of two designs. 
However, it is not unique how to define a convex sum of two designs $\e(m)$ and $\e'(m')$ for $m\neq m'$ in general. 
The usual treatment of this difficulty is to introduce a measure $\xi$ on the design space $\cE$. 
This is to consider an experimental design of the form $\e_\xi:= \xi(d\e)$. 
This formalism is certainly more general, since a discrete measure reduces to the 
case of the above continuous design problem. 
The Fisher information about this design then takes of the form:
\begin{equation}
J(\xi):=\int \xi(d\e)J_{\bm \theta}[\e]. 
\end{equation}
With this notation, the problem is expressed as 
\begin{align} \nonumber
\Psi_*&=\min_{\xi\in \Xi} \Psi\Big(J(\xi)\Big), \\
\label{opt_design2}
\xi_*&=\arg\min_{\xi\in \Xi} \Psi\Big(J(\xi)\Big), 
\end{align}
with $\Xi$ the totality of probability measures on the design space $\cE$. 
We call $\xi$ as the {\it design measure} or simply a {\it design} when no confusion arises.  
This is an object of our interest in the theory of optimal DoE. 
It is easy to see from Carath\'eodory's theorem that 
an optimal design measure can be found by using not more than $ n(n+1)/2+1$ support points 
with $n$ the number of parameters to be estimated. 

Another important problem other than finding an optimal design is to characterize 
the structure of the Fisher information matrix for all possible designs: 
\begin{align}\nonumber
\cJ(\cE)&:= \{J_{\bm \theta}[\e]\,|\, \e\in\cE \}, \\ \label{Fisherset}
\cJ(\Xi)&:= \{\int \xi(d\e)J_{\bm \theta}[\e]\,|\, \xi\in\Xi \}.
\end{align}
Clearly, $\cJ(\Xi)$ is the convex hull of $\cJ(\cE)$. 
We call the sets $\cJ(\cE)$ and $\cJ(\Xi)$ as the {\it Fisher information regions}.
This Fisher information region is a well-known concept in classical statistics. 
Several authors  studied it in the context of the quantum estimation theory.\cite{hayashi2016quantum,yamagata11,zhu2015information} 
The optimization problem takes the following alternative form as a minimization over the convex set: 
\begin{align}\nonumber
\Psi_*&:=\min_{J\in\cJ} \Psi(J),\\ 
J_*&:=\arg\min_{J\in\cJ} \Psi(J). 
\end{align} 
From the optimal Fisher information matrix, we then associate it with the optimal design as $J_{\bm \theta}[\e_*]=J_*$. 

\subsection{Necessary and sufficient condition}\label{sec-optcondition}
Under the assumptions made in our discussion, we can derive the necessary and sufficient condition 
for the optimal design in various different forms. 
This is one of the central subject in the theory of optimal DoE.\cite{fedorov,pukelsheim,fh97,pp13,fl14} 

For a given optimality function $\Psi$ satisfying conditions in Sec.~\ref{sec:formulation}, 
we consider the directional derivative: 
\[
\Psi'(\xi_0;\xi):=\lim_{\epsilon\to0}\frac{1}{\epsilon}\Big[\Psi\big((1-\epsilon)J(\xi_0)+\epsilon J(\xi) \big)-J(\xi_0)\Big],
\]
where $J(\xi)=\int \xi(d\e)J_{\bm \theta}[\e]$. 
It is straightforward to see that $\xi_*$ is an optimal design measure if and only if 
the directional derivative is nonnegative $\Psi'(\xi_*;\xi)\ge 0$ for all $\xi\in\Xi$. 
In the theory of optimal DoE, many of the optimality functions admit the following 
special form for the directional derivative: 
\[
\Psi'(\xi_0;\xi)=\int \xi(d\e) \psi(\e,\xi_0). 
\]
with $\psi(\e,\xi)$ some function of the design measure and design. 
It is convenient to introduce the {\it sensitivity function} $\varphi$ for the optimality function $\Psi$ by 
\be
\varphi(\e,\xi):=-\psi(\e,\xi)+C(\xi),
\ee
where $C$ is a function of the design measure $\xi$, defined for each $\Psi$.\cite{fh97,fl14} 
As examples, let us list two popular optimal designs; the $A$-optimality $\Psi(J)=\mathrm{Tr}\{WJ^{-1}\}$ with a weight matrix $W>0$ 
and $D$-optimality with $\Psi(J)=\log \Det{J^{-1}}$. \\
\underline{$A$-optimality:} \\
$\psi(\e,\xi)=\Tr{W J(\xi)^{-1} J_{\bm \theta}[\e]J(\xi)^{-1}}$, $C(\xi)=\Psi(\xi)$, \\
where $\Psi(\xi)= \Psi(J(\xi))$.\\
\underline{$D$-optimality:}\\
$\psi(\e,\xi)=\Tr{J(\xi)J_{\bm \theta}[\e]}$, $C(\xi)=n$. 

With these notations, the following theorem holds.\cite{fedorov,pukelsheim,fh97,fl14} 
\begin{theorem} \label{thm_equiv2}
For an optimality function satisfying the condition discussed before, 
the following design problems are equivalent. 
\begin{align*}
\mathrm{1)}&\quad \min_{\xi\in\Xi}\Psi(\xi),\hspace{5cm}\\
\mathrm{2)}&\quad \min_{\xi\in\Xi}\max_{\e\in\cE}\varphi(\e,\xi),\\
\mathrm{3)}&\quad \max_{\e\in\cE}\varphi(\e,\xi)=\Psi(\xi). 
\end{align*}
\end{theorem}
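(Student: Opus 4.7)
I would establish the chain 1) $\Leftrightarrow$ 3) $\Leftrightarrow$ 2) by combining the convexity of $\Psi$ on the convex set $\Xi$ with the integral representation of the directional derivative displayed just above the theorem. Convexity (condition iii) together with linearity of $\xi\mapsto J(\xi)$ give the classical first-order characterization: $\xi_*$ solves 1) if and only if the directional derivative $\Psi'(\xi_*;\xi)\ge 0$ for every $\xi\in\Xi$. This reduces the global optimization over $\Xi$ to a one-sided variational test along all admissible directions, which is the hinge of the argument.

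Next I would exploit the assumed special form $\Psi'(\xi_*;\xi)=\int \xi(d\e)\,\psi(\e,\xi_*)$. Testing against Dirac measures $\delta_\e$ collapses the inequality ``nonnegative for all $\xi\in\Xi$'' to the pointwise condition $\psi(\e,\xi_*)\ge 0$ for every $\e\in\cE$, and integration of the pointwise bound against an arbitrary $\xi\in\Xi$ supplies the converse. Rewriting in terms of the sensitivity function through $\varphi(\e,\xi)=-\psi(\e,\xi)+C(\xi)$ then converts optimality into the pointwise ceiling $\sup_{\e\in\cE}\varphi(\e,\xi_*)\le C(\xi_*)$.

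The bridge to statement 3) is the structural identity $\int \xi(d\e)\,\varphi(\e,\xi)=\Psi(\xi)$, which one reads off directly from the definitions of $\psi$ and $C$ associated with each admissible $\Psi$ (the $A$- and $D$-entries listed just before the theorem are model calculations of exactly this kind). Averaging forces the universal bound $\sup_{\e\in\cE}\varphi(\e,\xi)\ge\Psi(\xi)$ for every $\xi\in\Xi$. Combined with the ceiling from the previous step, optimality of $\xi_*$ makes this universal inequality saturate, giving $\max_{\e\in\cE}\varphi(\e,\xi_*)=\Psi(\xi_*)$, which is precisely 3). The equivalence 3) $\Leftrightarrow$ 2) is then immediate: since $\sup_\e\varphi(\e,\xi)\ge\Psi(\xi)$ holds pointwise in $\xi$ with equality attainable only at a solution of 1), minimizing $\sup_\e\varphi(\e,\xi)$ over $\Xi$ has the same minimizers as minimizing $\Psi(\xi)$.

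The main obstacle will be verifying the two structural identities $\Psi'(\xi_0;\xi)=\int\xi(d\e)\,\psi(\e,\xi_0)$ and $\int\xi(d\e)\,\varphi(\e,\xi)=\Psi(\xi)$ at the level of generality claimed, since they rest on differentiating $\Psi$ along the linear perturbation $J(\xi)-J(\xi_0)$ via the matrix chain rule and on choosing the normalization $C(\xi)$ so that the bookkeeping in 3) works out. Once these ingredients are secured for the $\Psi$ at hand, the rest is a short exercise in convex analysis; without them, one is forced back to a case-by-case reading of the pointwise condition $\psi(\e,\xi_*)\ge 0$.
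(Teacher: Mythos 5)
The paper never proves Theorem~\ref{thm_equiv2}: it is quoted as the classical general equivalence theorem of optimal DoE, with the proof deferred to the cited references (Fedorov, Pukelsheim, etc.), so there is no in-text argument to compare yours against. Your outline is, in substance, exactly the standard proof from that literature: convexity of $\Psi$ composed with the affine map $\xi\mapsto J(\xi)$ reduces global optimality to nonnegativity of the directional derivative; testing the integral representation of that derivative against Dirac measures collapses the condition to a pointwise bound on the sensitivity function; and an averaging identity supplies the reverse inequality, forcing saturation at the optimum and yielding 3), from which 2) follows. I see no gap in that skeleton. One bookkeeping caveat: the identity you invoke should be $\int\xi(d\e)\,\psi(\e,\xi)=C(\xi)$ rather than $\int\xi(d\e)\,\varphi(\e,\xi)=\Psi(\xi)$. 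The two coincide for $A$-optimality, where $C(\xi)=\Psi(\xi)$, but not for $D$-optimality, where $C(\xi)=n$ while $\Psi(\xi)=\log\Det{J(\xi)^{-1}}$; accordingly condition 3) is really $\max_{\e\in\cE}\varphi(\e,\xi)=C(\xi)$, which is how the paper itself uses it in Sec.~\ref{sec4} (e.g.\ $\max_{\e}\Tr{J_{\bm\theta}[\e]J(\xi_*)^{-1}}=n$ for the $D$-criterion). The paper's own sign conventions for $\psi$ versus $\varphi=-\psi+C$ are not internally consistent (the quantity actually maximized in Sec.~\ref{sec4} is what Sec.~\ref{sec-optcondition} calls $\psi$), so your bookkeeping cannot be faulted for failing to match them exactly; just carry the normalization $C(\xi)$ through explicitly when you verify the two structural identities for each concrete $\Psi$, which, as you rightly say, is where all the criterion-specific work lives.
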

This theorem can be regarded as a generalization of one of the celebrated results in the theory of optimal DoE, 
known as the equivalence theorem due to Kiefer and Wolfowitz.\cite{kw60} See Refs.~\cite{fedorov,pukelsheim,fh97,pp13,fl14} for more details. 

\subsection{Miscellaneous items}\label{sec:misc}
To supply some more languages for optimal DoE, we list a few of them.  
First, we need to be clear on the concept of optimality in the optimal DoE. 
Let us consider an optimization problem \eqref{def:opt_design0} for simplicity. 
More general case of optimal designs can be considered similarly. 

The optimal design $\e_*=\mathrm{arg}\min_{\e\in\cE} \Psi(J_{\bm \theta}[\e])$ is called 
a {\it local optimal design} in the sense that it is optimal at a specific point ${\bm \theta}$. 
In general, this local optimal design depends on the unknown value ${\bm \theta}$. 
In other words, we should express it as $\e_{*}(\bm{\theta})=\mathrm{arg}\min_{\e\in\cE} \Psi\Big(J_{\bm \theta}[\e]\Big)$. 
When dealing with the generic statistical models, one always finds a local optimal design only. 
Only when, one simplifies a model, such a simple linear regression model, we can find the global optimal design, which is optimal uniformly in $\bm{\theta}$, i.e., $\forall \bm{\theta},\bm{\theta}',\e_{*}(\bm{\theta})=\e_{*}(\bm{\theta}')$.  
In practice, one then has to combine other techniques of DoE to realize an optimal design. 
This has been studied in the field of classical optimal DoE in past 
under the name of the adaptive or the sequential design problem. \cite{fedorov,pukelsheim,fh97,pp13,fl14}
The adaptive estimation scheme will not be a subject of our paper due to the page limitation. 
It is interesting to lean that these adaptive schemes were independently discovered in the context of quantum state estimation problems. Nagaoka first proposed such an adaptive method based on updating the likelihood function.\cite{nagaoka89-2} 
Later, others proposed different variants of adaptive methods.\cite{HM98,BNG00} 
The latter method is based on splitting $N$ samples into two sets. 
The first set is used to give a rough estimate, and then we apply a near optimal strategy for the second set. 
We note that this method was already well studied in the classical statistics.\cite{fedorov,pukelsheim,fh97,pp13,fl14} 
As a word of caution, the two-step method for the asymptotic case is a method of proof for convergence. 
A practical problem in the theory of DoE is to find the optimal division of $N$ samples into two sets or more generally several sets, which gives the lowest estimation error. 

Second, we say that a design $\e$ is {\it singular}, when the resulting statistical model 
$M(\e)$ is not regular. 
See for example Ref.~\cite{at95} on the detail discussion of non-regular models. 
One common instance of a singular design is when the classical Fisher information matrix is rank deficient. In fact, we often deal with singular models in the theory of optimal DoE. 
In this case, we may use the generalized inverse matrix method to evaluate the inverse of the classical Fisher information matrix. However, we cannot estimate all parameters simultaneously. 
There are alternative techniques known in the theory of optimal DoE.\cite{fedorov,fh97,pp13,fl14} 
Appendix Sec.~5 of Ref.~\cite{gns19} gives a short summary for these methodologies. 
We will make a few more comments on local optimality and the problem of singular designs in Sec.~\ref{sec:local_singular} for the quantum case. 

In passing, we note that a recent paper \cite{sp2020_ijqi} discussed non-regular measurements. 
They called a measurement $\Pi$ (a design $\e$ in our terminology) is regular, when it is $\bm{\theta}$-independent. 
We stress that $\bm{\theta}$-independence is different from the concept of local optimal design. 
Further, they introduced a non-regular measurement $\e_{\bm \theta}$ that also comprises a part of parametric dependence in the resulting statistical model: 
\begin{equation}
M(\e_{\bm \theta})=\{ p_{\bm \theta}(\cdot|\e_{\bm \theta})\,|\,\bm{\theta}\in\Theta\}. 
\end{equation}
We note that this setting is unusual in the sense that the design $\e$ is no longer under our control. 
It is rather a part of the statistical model itself under consideration. 
In this special case, one has to differentiate $\bm{\theta}$ for the family of designs $\e_{\bm \theta}$, since we do not have precise knowledge on it. 

Third, a family of states $\{s_{\bm \theta}\,|\, {\bm \theta}\in\Theta\}$ is said locally {\it identifiable} at $\bm{\theta}_0$, if there exists some neighborhood $\cB_{\bm{\theta}_0}$ of $\bm{\theta}_0$ such that 
the following conditions is satisfied:
\begin{equation}
\forall\bm{\theta}\in\cB_{\bm{\theta}_0},\forall\e\in\cE,\,p_{\bm \theta}(\cdot|\e)=p_{\bm{\theta}_0}(\cdot|\e)\Rightarrow \bm{\theta}=\bm{\theta}_0.
\end{equation} 
When this property holds for all parameter set, i.e., $\cB_{\bm{\theta}_0}=\Theta$, we say this family is (globally) {\it identifiable}. 
Clearly, if statistical models $M(\e)$ for all designs $\e\in\cE$ are regular, 
$\bm{\theta}\mapsto p_{\bm \theta}(\cdot|\e)$ is one-to-one. 
Thus, the identifiability condition is satisfied. 

In addition to identifiability of states, we have an issue of estimability. 
It is easy to check that we cannot estimate all parameters when a design is singular. 
In this case, only a certain linear combinations of the parameters can be estimated by this singular design. 
In the following, we focus on the case of a linear combination of the parameters. 
See for example Refs.~\cite{pukelsheim,pp13} for more general case.
Suppose one is only interested in estimating a linear combination of parameters: 
\begin{equation}\label{def:theta-c}
\theta_{\bm{c}}:=\bm{c}^t\bm{\theta}=\sum_{i=1}^nc_i\theta_i, 
\end{equation}
for a given $n$-dimensional (column) vector $\bm{c}$. 
In the language of optimal DoE, this setting is the $c$-optimal design problem. 
The parameter ${\theta}_{\bm{c}}$ is said {\it estimable}, 
if there exists a design $\e$ such that the range of $J_{\bm{\theta}}[\e]$ includes the vector $\bm{c}$. Otherwise, the design $\e$ cannot be use to estimate ${\theta}_{\bm{c}}$. 
We can also express this condition by the concept of the feasibility cone as follows. 
Define the {\it feasibility cone} for $\bm{c}$ by the subset of non-negative matrices:  
\begin{equation}\label{def:feasibility_cone}
\cA(\bm{c}):=\{ A\in\bbr^{n\times n}\,|\, A\ge0,\,A\bm{c}\neq 0\}. 
\end{equation}
Then, ${\theta}_{\bm{c}}$ is estimable if and only if $J_{\bm{\theta}}[\e] \in \cA(\bm{c})$ for some design $\e$. 
Therefore, the $c$-optimal design problem should be reformulated as
\begin{equation}\label{def:c-opt_singular}
\e_*:=\mathrm{arg}\hspace{-12pt}\min_{\e\in\cE:J_{\bm{\theta}}[\e] \in \cA(\bm{c})} c^t\big(J_{\bm{\theta}}[\e]\big)^{-1}c. 
\end{equation} 
Here, the inverse of the Fisher information matrix is evaluated in the sense of the generalized inverse. 

As a final remark on the singular design problem, we make a comment on the optimal DoE. 
The $E$-optimal design problem is also expressed as the following alternative form:
\begin{align*}
\Psi_E(J)&=\ds\lambda_{\max}(J^{-1})=\max_{\bm{c}\in\bbr^n:|\bm{c}|=1}\bm{c}^tJ^{-1}\bm{c}\\
&=\ds\left(\lambda_{\min}(J)\right)^{-1}
=\left(\min_{\bm{c}\in\bbr^n:|\bm{c}|=1}\bm{c}^tJ\bm{c}\right)^{-1}. 
\end{align*}
From this expression, we see that $E$-optimal design 
is amount to the min-max optimization of a certain $c$-optimal design problem. 
Unlike to the standard $c$-optimality criterion, 
however, we are interested in estimating all parameters in the $E$-optimality criterion. 
Therefore, we should avoid singular optimal designs. 

Related to the issue of local optimal designs and singular designs, we have a remark on the value of the optimality function. 
Let us denote the minimum value of an optimality function at $\bm{\theta}$ by $\Psi_{\bm \theta}$. 
Consider arbitrary two-different points $\bm{\theta}_0$ and $\bm{\theta}_0'$, 
and corresponding optimal designs:
\begin{align*}
\e_{*}(\bm{\theta}_0)&:=\mathrm{arg}\min_{\e\in\cE} \Psi\Big(J_{\bm \theta}[\e]\Big)\\
\e_{*}(\bm{\theta}_0')&:=\mathrm{arg}\min_{\e\in\cE} \Psi\Big(J_{\bm{\theta}_0'}[\e]\Big). 
\end{align*}
The design $\e_{*}(\bm{\theta}_0)$ is optimal at $\bm{\theta}_0$, but not at $\bm{\theta}_0'$. 
Generally speaking, there is no ordering relation between two values $\Psi_{\bm{\theta}_0}$ and $\Psi_{\bm{\theta}_0'}$, nor matrix ordering between two optimal Fisher information matrices, 
$J_{\bm{\theta}_0}[\e_{*}(\bm{\theta}_0)]$ and $J_{\bm{\theta}_0'}[\e_{*}(\bm{\theta}_0')]$. 
To see this, let us consider the case when two points are nearby each other. 
In this case, a small deviation $\bm{\theta}_0'=\bm{\theta}_0+\bm{\delta}$, with $\bm{\delta}=(\delta_i)$ a small vector, 
results in the following approximation up to the first order in $|\bm{\delta}|$:
\begin{align}
J_{\bm{\theta}_0'}[\e_{*}(\bm{\theta}_0')]&= J_{\bm{\theta}_0+\bm{\delta}}[\e_{*}(\bm{\theta}_0')]\\ 
\label{eq:approx_opt}&\simeq J_{\bm{\theta}_0}[\e_{*}(\bm{\theta}_0')]+\left.\sum_k\delta_k\frac{\del J_{\bm{\theta}}[\e_{*}(\bm{\theta}_0')]}{\del\theta_k}\right|_{\bm{\theta}=\bm{\theta}_0}, 
\end{align}
where the partial differentiation of a matrix is done by component wise. 
The second term of Eq.~\eqref{eq:approx_opt} is symmetric, but it does not have a definite sign as a matrix in general. 
Upon assuming closeness between two designs, we substitute a relation, $\e_{*}(\bm{\theta}_0')\simeq(1-\epsilon)\e_{*}(\bm{\theta}_0)+\epsilon\e_0$ 
for some design $\e_0$ and small $\epsilon$ in the sense of a randomized design. 
Then, the first term of Eq.~\eqref{eq:approx_opt} is expressed as
\begin{align}
J_{\bm{\theta}_0}[\e_{*}(\bm{\theta}_0')]&\simeq J_{\bm{\theta}_0}[(1-\epsilon)\e_{*}(\bm{\theta}_0)+\epsilon\e_0]\\
&=(1-\epsilon)J_{\bm{\theta}_0}[\e_{*}(\bm{\theta}_0)]+\epsilon J_{\bm{\theta}_0}[\e_0]. 
\end{align}
Therefore, we obtain an approximated relationship between two Fisher information matrices 
$J_{\bm{\theta}_0}[\e_{*}(\bm{\theta}_0)]$ and $J_{\bm{\theta}_0'}[\e_{*}(\bm{\theta}_0')]$ 
without a definite matrix ordering.  

As a final remark on the singular design problem, we comment on the use of generalized inverse of the Fisher information matrix. 
When an optimal design $\e_{*}$ is singular, an extra complication may arise. 
In this case, we often use an appropriate generalized inverse of the Fisher information matrix for $J_{\bm{\theta}}[\e_{*}]$. 
In some circumstances, the obtained result may depend on a particular choice of generalized inverses. See for example Refs.~\cite{pukelsheim,pp13}. 
This point will be important for the $c$-optimality for example. 
We will expand this discussion in Sec.~\ref{sec:local_singular} for the quantum case. 

To end this subsection, we shortly list three extensions of the theory of optimal DoE. 
First is the optimal design under constraint(s). 
The design is typically subject to an additional constraint(s) 
in order to take into account realistic experimental situations. Optimal design 
of experiments with constraint(s) can also be formulated.\cite{fh97,fl14} 

Second is a compound optimal design. 
Consider two optimality functions to define a new function $\Psi_\nu:=\nu \Psi_1+(1-\nu) \Psi_2$ with $\nu$ fixed positive parameters. 
$\e_*=\arg\min\Psi_\nu[\e]$ is called a {\it compound optimal design}, 
and it represents a tradeoff relation between two different optimal designs defined by $\Psi_1,\Psi_2$. 

Last is to evaluate efficiency of design. 
Given an optimality function $\Psi$, we can define the optimal design $\e_*$ for this optimality. 
In practice, one is not only interested in finding the optimal design, 
but also the performance of a suboptimal design, say $\e_{0}$, which can be easily implemented. 
To this end, we need to know smallness of the value $\Psi(\e_{0})$. 
Note that $\Psi[\e]$ is a relative quantity, and hence, we cannot immediately conclude the performance of the design $\e_{0}$ based on the value $\Psi(\e_{0})$ only. 
The standard way to handle this problem is to consider a normalized version function $\Psi$ by its optimal value $\Psi[\e_*]$, which is defined as 
\begin{equation}
\eta_\Psi[\e]:=\frac{\Psi[\e_*]}{\Psi[\e]}=\frac{\min_{\e\in\cE}\Psi[\e]}{\Psi[\e]} . 
\end{equation}
We call $\eta_\Psi[\e]$ the {\it efficiency} of the design $\e$ with respect to the optimality function $\Psi$. By definition, the normalized function satisfies $0\le\eta_\Psi[\e]\le1$. 
Notably, the equality $\Psi[\e]=1$ does not necessary imply $\e$ is an optimal design for $\Psi$ optimality. 

An another application of efficiency of design is comparison of different optimality criteria. 
Consider two optimality criteria based on $\Psi_1$ and $\Psi_2$. 
The optimal design $\e_{*}=\arg\min\Psi_1[\e]$ is optimal for $\Psi_1$ but not for $\Psi_2$. 
One may naively expect that this is also good for $\Psi_2$. 
To quantify how good it is, we can analyze efficiency 
\begin{equation}
\eta_{\Psi_2}[\e_*]=\frac{\min_{\e}\Psi_2[\e]}{\Psi_2[\e_*]} . 
\end{equation}
If this quantity is close to $1$, it means that $\e_*$ is also good for the other criterion.  
This will be studied in Sec.~\ref{sec5}. 

Applications of the above extended optimal designs were discussed 
in various statistical problems, see Refs.~\cite{pukelsheim,fh97,pp13,fl14}. 

\section{Quantum state estimation as optimal design of experiments}\label{sec3}
We now apply the theory of optimal DoE to the parameter estimation problem in quantum systems. 

\subsection{Definitions}
A {\it quantum system} is represented by a $d$-dimensional complex vector space $\bbc^d$. 
With the standard inner product, it becomes a Hilbert space denoted by $\cH=\bbc^d$. 
When the dimension of the system is two, we speak of ``qubit" that is the simplest quantum system. 
To simplify our discussion we only consider quantum systems with a fixed dimension $d<\infty$. 
A {\it quantum state} $\rho$ is a non-negative matrix on $\cH$ with unit trace. 
The set of all quantum states on $\cH$ is denoted by 
$\sofh:=\{\rho\in\bbc^{d\times d}\,|\,\rho\ge0,\tr{\rho}=1 \}$.
A {\it measurement} $\Pi$ on a given quantum state $\rho$ is described a set of positive semidefinite matrices 
$\Pi=\{\Pi_x\}_{x\in\cX}$ ($\forall x,\Pi_x\ge0$) such that the condition $\sum_{x\in\cX}\Pi_x=I_d$ (Identity matrix) is satisfied. 
When $\Pi$ is performed on $\rho$, the measurement outcomes are drawn 
according to a probability distribution: 
\begin{equation*}
p_\rho(x|\Pi):=\tr{\rho\Pi_x}. 
\end{equation*}
Here the set $\cX$ is a label set for the measurement outcomes. 
This probabilistic rule (Born's rule) will be used to define the model function. 

\subsection{Formulation of the problem}
We are now in place to formulate the parameter estimation problem about quantum states as a problem of an optimal DoE . 
Given a family of $n$-parameter quantum states  
\begin{equation*}
M^Q:=\{\rho_{\bm \theta}\,|\,{\bm \theta}\in\Theta\subset\bbr^n\}, 
\end{equation*}
under the assumption that ${\bm \theta}\mapsto\rho_{\bm \theta}$ is one-to-one and smooth mapping. 
We identify the quantum state $\rho$ as the state $s$. 
The design in our setting is a measurement $\e=\Pi$, and the model function is given by Born's rule: 
\begin{equation*}
f:\,(\rho_{\bm \theta},\e)\mapsto p_{\bm \theta}(x|\e)=\tr{\rho_{\bm \theta} \Pi_x}\quad({x\in\cX}). 
\end{equation*}
Thus, the design space is the set of all possible POVMs. 

The statistical model for a design $\e$ is obtained as 
\begin{equation*}
M(\e)=\{p_{\bm \theta}(\cdot|\e)\,|\,\bm{\theta}\in\Theta\}. 
\end{equation*}
We wish to find an optimal design $\xi_*\in\Xi$ 
that minimizes a properly chosen optimality criterion as Eq.~\eqref{opt_design2}. 
An important aspect of the optimal design problem for quantum state estimation  
is that the design $\e=\Pi$ (measurement) is subject to the constraints: 
\[
\forall x,\,\Pi_x\ge0\mbox{ and }\sum_{x\in\cX}\Pi_x=I_d, 
\]
that gives rise to $d(d+1)/2$ constraints for positive semidefinite matrices $\Pi_x$.  
A unique feature of DoE in the quantum case is 
that these constraints appear in the design space $\cE$ by the laws of quantum theory. 

As stated before, convex structure in the design space (a set of all POVMs) is important. 
A convex mixture of two POVMs is defined as follows. 
Let $\Pi=\{\Pi_1,\Pi_2,\ldots,\Pi_k\}$ and $\Pi'=\{\Pi'_1,\Pi'_2,\ldots,\Pi'_{k'}\}$ be two POVMs. 
For a given $\lambda\in[0,1]$, we define a new POVM by
\begin{align*}
\Pi_\lambda&=\lambda\Pi\cup(1-\lambda)\Pi'\\
&:=\{\lambda\Pi_1,\lambda\Pi_2,\ldots,\lambda\Pi_k\}\cup\{(1-\lambda)\Pi'_1,(1-\lambda)\Pi'_2,\ldots,(1-\lambda)\Pi'_{k'}\}\\
&=\{\lambda\Pi_1,\lambda\Pi_2,\ldots,\lambda\Pi_k,(1-\lambda)\Pi'_1,(1-\lambda)\Pi'_2,\ldots,(1-\lambda)\Pi'_{k'}\}, 
\end{align*}
whose measurement outcomes are $k+k'$. 
The convex structure for the POVM space plays an important role, 
since the problem can be casted into a convex optimization problem. 
This point was already pointed out in the literature.\cite{D_Ariano_2005,fujiwara06,yamagata11}

When some of POVM elements are proportional to each other, 
one could combine them without affecting measurement statistics. 
For example, assume $\Pi'_1=c\Pi_1$ with $c$ a positive constant, then a new POVM element of the form $\Pi^{\rm new}_1=\big(\lambda+c(1-\lambda)\big)\Pi_1$ provides the same design. 

\subsection{Extensions of the problem}
In this subsection, we briefly list possible extensions of the DoE formalism for the quantum-state estimation problem. We note that most of these results are already known in the literature, yet we could present them in a unified manner based on the language of the theory of optimal DoE.
 
\subsubsection{Restricted measurement}\label{sec:restricted_POVM}
When only some of measurements are accessible in laboratory, it does not make sense to find an optimal POVM among all possible POVMs. Let $\cE_0\subset \cE$ be the subset of the design space, and consider the following optimization problem\footnote{Even if the minimum exists for the original design problem, the restricted design problem may only allow the solution in the sense of the infimum.}: 
\begin{align}\nonumber
\Psi^0_*&:=\min_{\e\in\cE_0} \Psi\Big(J_{\bm \theta}[\e]\Big),\\ 
\e^0_*&:=\mathrm{arg}\min_{\e\in\cE_0} \Psi\Big(J_{\bm \theta}[\e]\Big).
\end{align} 
Clearly, this optimal design $\e_*^0$ represents what we could best among all ``accessible" POVMs. 
A typical case is when only projections measurements are allowed. 
In this case, we optimize over the PVM space $\cE_{\rm PVM}=\{ \Pi|\mbox{$\Pi$ is PVM.}\}$. 
We know that an optimal measurement is not in general given by a PVM. 
By considering the continuous design problem, we could do better in general. 
The problem to be solved now is 
\begin{align}\nonumber
\Psi_*^{\rm Random}(m)&:=\min_{\e(m)\in\in\cP(m)\times\cE_{\rm PVM}^m} \Psi\Big(J_{\bm \theta}[\e]\Big),\\ 
\e_*^{\rm Random}(m)&:=\mathrm{arg}\min_{\e(m)\in\in\cP(m)\times\cE_{\rm PVM}^m} \Psi\Big(J_{\bm \theta}[\e]\Big).
\end{align}
Then, we wish find an optimal $m_*$ minimizing the number of different designs. 
By randomizing different designs, the optimal design $\e_*(m_*)$ can perform better; $\Psi_*^{\rm Random}(m_*)\le \Psi^0_*$. 

Note that one could attain optimal precision in some case by solving the above continuous design problem within the restricted design space. 
In other words, one could do best simply by measuring several PVMs randomly according to a proper distribution. In Sec.~\ref{sec:qubit-FIregion}, we will show that all possible qubit models can exhibit such an optimal solution. In higher dimensional case, it seems that this is not the case.  
In Sec.~\ref{sec:FI}, we give more discussion on this point.

\subsubsection{Classical-quantum state formalism}\label{sec:cq-state}
The continuous design problem is also interpreted as follows. 
The basic idea is to use a classical-quantum (CQ) state. 
Let us rewrite a quantum state $\rho_{\bm \theta}$ as a random mixture of an extended state of the form:
\begin{equation}
\widehat{\rho}_{\bm \theta}=\sum_{i=1}^m \p_i \cket{i}\cbra{i}\otimes \rho_{\bm \theta}, 
\end{equation}
where $\{\cket{i}\}_{i=1}^{m}$ is an orthonormal basis for the $m$-dimensional real vector space $\cH_C:=\bbr^m$, 
and $\p=(\p_i)$ denotes the known probability vector. 
Thus, the total Hilbert space is extended to $\widehat{\cH}=\cH_C\otimes\cH$. 
Next, we consider a set of POVMs $\e(m)=(\Pi^{(1)},\Pi^{(2)},\ldots,\Pi^{(m)})$, 
whose element forms a valid POVM $\Pi^{(k)}=\{\Pi^{(k)}_x\}_{x\in\cX_k}$ for each $k$. 
If we perform a POVM on the extended space $\widehat{\cH}$ of the form $\widehat{\Pi}:=\{\widehat{\Pi}_{k,x_k} \}_{(k,x_k)}$ with $\widehat{\Pi}_{k,x_k}:=\cket{k}\cbra{k}\otimes \Pi^{(k)}_{x_k}$, 
the resulting statistical model is given by 
\begin{equation}
\widehat{M}(\widehat{\Pi})=\{\widehat{p}_{\bm \theta}(\cdot|\widehat{\Pi})\,|\,  {\bm \theta}\in\Theta\},
\end{equation}
where measurement outcomes is labeled by the double index as $\widehat{p}_{\bm \theta}(k,x_k|\widehat{\Pi})$. By construction, we have 
\begin{equation}
\widehat{p}_{\bm \theta}(k,x_k|\widehat{\Pi})=\p_k\tr{\rho_{\bm \theta}\Pi^{(k)}_{x_k}}, 
\end{equation}
which forms a joint probability distribution. [See Eq.~\eqref{eq:jointprob}.]

Additivity of the classical Fisher information matrix yields the formula:
\begin{equation}
J_{\bm \theta}[\widehat{\Pi}]= \sum_{k=1}^m \p_kJ_{\bm \theta}[\Pi^{(k)}]. 
\end{equation}
This is exactly the same formula as Eq.~\eqref{eq:CFI_random}, which is obtained as the continuous design problem. 
Although this mathematical equivalence is almost trivial, this result might come out as a surprise when interpreted as follows. 
Consider process tomography or a channel estimation problem instead in the framework of DoE.\cite{gns19} 
A task here is to design a set of good input states and send them to an unknown channel. 
Output states are then measured with appropriate POVMs. 
It is clear that we need to prepare multiple input states to find an optimal strategy. 
If we phrase the whole process as the CQ state scenario, we might then 
interpret it as if we only need to prepare a one big CQ state. 
However, we should not call it as a ``one-shot" estimation strategy. 
A trick here is, of course, we are working on the infinite sample size limit to approximate the exact design problem. 

\subsubsection{Collective measurement strategy}
It is well known that collective measurements on multiple copies of a state can perform equally or better than individual measurements depending on the nature of models. 
The case of collective strategy can also be handled similarly. 
Consider $N$ identical copies of unknown states: 
$\rho_{\bm \theta}^{\otimes N}:=\rho_{\bm \theta}\otimes \rho_{\bm \theta}\otimes\ldots\otimes \rho_{\bm \theta}$. The design now is described by a POVM on the $N$ tensor Hilbert space 
$\cH^{\otimes N}=\cH\otimes \cH\otimes\ldots\otimes\cH$. 
Then, the optimization problem is given by
\begin{align}\nonumber
\Psi^{(N)}_*&:=\min_{\e\in\cE^{(N)}} \Psi\Big(J_{\bm \theta}[\e]\Big),\\ 
\e^{(N)}_*&:=\mathrm{arg}\min_{\e\in\cE^{(N)}} \Psi\Big(J_{\bm \theta}[\e]\Big),
\end{align} 
where $\cE^{(N)}$ denotes the set of all possible POVMS on $\cH^{\otimes N}$. 

\subsubsection{Holevo-Nagaoka type bound}
In the theory of quantum state estimation, the Holevo bound \cite{holevo} established 
the fundamental precision limit. This bound is defined by minimizing a function 
of an $n\times n$ positive semi-definite matrix $Z[\vec{X}]$ whose components are
\begin{equation}
Z[\vec{X}]=\big[\tr{\rho_{\bm \theta}X_kX_j} \big]_{jk}, 
\end{equation} 
over an $n$ Hermitian operators $\vec{X}=(X_1,X_2,\ldots,X_n)$ under the locally unbiased condition: 
\begin{equation}
\cX:=\left\{\vec{X}\,|\,X_i\ \text{Hermitian},\, \forall ij,\tr{\rho_{\bm \theta} X_i}=0,\, \tr{\frac{\del \rho_{\bm \theta}}{\del \theta_i}X_j}=\delta_{ij}\right\}. 
\end{equation}
It is important to note that when $\vec{X}$ in the set $\cX$, 
the conditions $\tr{\frac{\del \rho_{\bm \theta}}{\del \theta_i}X_j}=\delta_{ij}$ 
require $X_1,X_2,\ldots,X_n$ to be linearly independent. 
And hence, $Z[\vec{X}]>0$ for all $\vec{X}\in\cX$. 
The Holevo bound sets the lowest achievable convergence rate 
in the asymptotic limit ($N\to\infty$). \cite{HM08,KG09,YFG13,YCH18}

In the language of the theory of DoE, the Holevo bound gives the first order asymptotics 
for the $A$-optimality under the collective POVM strategy explained in the previous subsection. 
It is then natural to extend the Holevo bound for other optimality criteria. 
This is done by a straightforward manner and we only provide the final result without details. 
Derivation here follows exactly same manner as Nagaoka's formulation. \cite{nagaoka89} 
We shall call the bound as the Holevo-Nagaoka type bound. 
For a given optimality function $\Psi$, the Holevo-Nagaoka type bound is given as follows. 
\begin{theorem}
The minimum value of the optimality function is bounded by
\begin{equation}
\min_{\xi\in\Xi}\Psi\big(\xi\big)\ge\Psi^{HN}.
\end{equation}
The $\Psi$-optimal Holevo-Nagaoka bound $\Psi^{HN}$ is defined by the minimization: 
\begin{align}
\Psi^{HN}&:=\min_{\vec{X}\in\cX}\chi_\Psi\big(Z[\vec{X}] \big),
\end{align}
where $\chi_\Psi$ is defined indirectly by the minimization: 
\begin{equation}
\chi_\Psi(Z):=\min_{J}\{\Psi\big( J \big)\,|\,J>0\ {\rm real},\,J^{-1}\ge Z \}.  
\end{equation} 
\end{theorem}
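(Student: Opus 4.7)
The plan is to follow Nagaoka's classical route for proving the Holevo bound, but abstracted so that the specific functional $\Psi$ only enters at the very end through the auxiliary quantity $\chi_\Psi$. The strategy is to show that for every single POVM $\Pi$ one can exhibit a vector of Hermitian operators $\vec{X} \in \cX$ whose $Z$-matrix obeys the key domination $J_{\bm \theta}[\Pi]^{-1} \ge Z[\vec{X}]$. Once this is in hand, $J_{\bm \theta}[\Pi]$ itself is a feasible point in the minimization defining $\chi_\Psi(Z[\vec{X}])$, so $\Psi(J_{\bm \theta}[\Pi]) \ge \chi_\Psi(Z[\vec{X}]) \ge \Psi^{HN}$, and taking the infimum over designs finishes the pointwise POVM case.

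First I would fix a POVM $\Pi = \{\Pi_x\}$ with full-rank Fisher information $J := J_{\bm \theta}[\Pi]$ and build the canonical locally unbiased estimator from the score functions, $\hat L_i(x) := \sum_j (J^{-1})_{ij}\, \partial_j \log p_{\bm \theta}(x|\Pi)$. Setting $X_i := \sum_x \hat L_i(x)\, \Pi_x$ gives Hermitian operators, and a direct computation using $p_{\bm \theta}(x|\Pi)=\tr{\rho_{\bm \theta} \Pi_x}$ yields $\tr{\rho_{\bm \theta} X_i} = E_{\bm \theta}[\hat L_i] = 0$ and $\tr{\partial_j \rho_{\bm \theta}\, X_i} = \sum_x \hat L_i(x)\,\partial_j p_{\bm \theta}(x|\Pi) = (J^{-1})_{ik} J_{jk} = \delta_{ij}$, confirming $\vec X \in \cX$. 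The central inequality $J^{-1} \ge Z[\vec{X}]$ is then a quantum Cauchy--Schwarz statement: for any $\bm{c}\in\bbr^n$, expand $\bm{c}^t (J^{-1} - Z[\vec X])\bm{c}$ and recognize it as $\sum_x p_{\bm \theta}(x|\Pi) |\hat L_{\bm c}(x)|^2 - \tr{\rho_{\bm \theta} X_{\bm c}^2}$ with $X_{\bm c} = \sum_i c_i X_i$, which is non-negative because taking diagonal matrix elements of $X_{\bm c}$ in any eigenbasis of $\rho_{\bm \theta}$ and comparing with its action as an operator reflects the loss of off-diagonal (coherence) contributions under a POVM coarse-graining.

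After this lemma, I would invoke the monotonicity and definition of $\chi_\Psi$: the feasible set for $\chi_\Psi(Z[\vec X])$ contains $J$ itself, giving $\chi_\Psi(Z[\vec X]) \le \Psi(J)$, hence $\Psi^{HN} \le \chi_\Psi(Z[\vec X]) \le \Psi(J_{\bm \theta}[\Pi])$. To promote this to an arbitrary design measure $\xi \in \Xi$, I would use the classical-quantum dilation of Sec.~\ref{sec:cq-state} to realize $\xi$ as a single POVM $\widehat\Pi$ on the enlarged Hilbert space $\widehat{\cH}=\cH_C\otimes\cH$ acting on the classically extended state $\widehat{\rho}_{\bm \theta} = \sum_i \p_i \cket{i}\cbra{i}\otimes \rho_{\bm \theta}$; since the Fisher information of $\widehat\Pi$ equals $J(\xi)=\int \xi(d\e) J_{\bm \theta}[\e]$ and since $\cX$ for $\widehat{\rho}_{\bm \theta}$ contains the block-diagonal operators $\widehat X_i = \sum_k \cket{k}\cbra{k}\otimes X_i^{(k)}$, the same construction carries over and $Z[\widehat{\vec X}]$ has the same reduction. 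The main obstacle is the quantum Cauchy--Schwarz step establishing $J^{-1} \ge Z[\vec X]$ as a Hermitian (not merely symmetric) matrix inequality, since the skew-symmetric part of $Z[\vec X]$ arising from $[X_i,X_j]$ has no classical analogue and must be absorbed correctly; everything else is a routine assembly of definitions.
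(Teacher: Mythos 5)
Your proposal is essentially the derivation the paper intends: the paper gives no proof, stating only that it ``follows exactly the same manner as Nagaoka's formulation,'' and your construction --- build $X_i=\sum_x \hat L_i(x)\Pi_x$ from the scores, verify $\vec X\in\cX$, prove $J_{\bm\theta}[\Pi]^{-1}\ge Z[\vec X]$, and then read off $\Psi(J_{\bm\theta}[\Pi])\ge\chi_\Psi(Z[\vec X])\ge\Psi^{HN}$ from the feasibility of $J_{\bm\theta}[\Pi]$ in the definition of $\chi_\Psi$ --- is exactly that route. The one step you flag as ``the main obstacle'' is not actually an obstacle: running your Cauchy--Schwarz computation with \emph{complex} coefficients $\bm c$, the inequality $\bm c^{\dagger}J^{-1}\bm c-\bm c^{\dagger}Z[\vec X]\bm c=\sum_x p_{\bm\theta}(x|\Pi)|\hat L_{\bm c}(x)|^2-\tr{\rho_{\bm\theta}X_{\bm c}X_{\bm c}^{\dagger}}\ge 0$ follows from the operator (Kadison--Schwarz) inequality $\Phi(f)\Phi(\bar f)\le\Phi(|f|^2)$ for the unital positive map $\Phi(f)=\sum_x f(x)\Pi_x$ (proved via Naimark dilation), which yields $J^{-1}\ge Z[\vec X]$ directly as a Hermitian matrix inequality, automatically handling the skew part coming from $[X_i,X_j]$. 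Also, the passage from a single POVM to an arbitrary design measure $\xi$ does not require your classical--quantum dilation: Proposition \ref{prop1} already guarantees that every $J(\xi)$ equals $J_{\bm\theta}[\Pi]$ for some single POVM $\Pi$, so the pointwise bound suffices.
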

As an example, the $A$-optimal Holevo-Nagaoka type bound is 
\[
\chi_{\Psi_A}(Z):=\Tr{\rm{Re}\, Z}+\Tr{|\rm{Im}\, Z|}.  
\]


Another straightforward extension is to bound the Holevo-Nagaoka type bound further by quantum Fisher information matrix $J_{\rm \theta}^Q$ such as the SLD and right logarithmic derivative (RLD) Fisher information matrices. Then, we obtain 
\[
\min_{\xi\in\Xi}\Psi\big(\xi\big)\ge\Psi\big(J_{\bm \theta}^Q \big). 
\]
This also follows from the fact that the quantum Fisher information matrix dominates the classical Fisher information matrix $J_{\bm \theta}^Q\ge J_{\bm \theta}[\e]$ for all designs. 

\subsection{Fisher information region}\label{sec:FI}
As emphasized in Sec.~\ref{sec:contDoE}, the Fisher information region is a key concept upon analyzing the problem of finding optimal DoE. 
Generally speaking, it is a hard task to obtain an exact structure for the Fisher information region analytically. 
In some case, this problem is even harder to find an optimal design itself. 
Nevertheless, it is worth deriving an approximated Fisher information region $\cJ_{approx}$ 
such that the true Fisher information region is the subset. 
Such the larger set $\cJ_{approx}$ can be used to derive the lower bound for the estimation errors for the optimality function under consideration. The celebrated Gill-Massar bound \cite{gm00} was derived 
by this logic, although the concept of the Fisher information region was not utilized explicitly. 

We now discuss an important property of the Fisher information region about the quantum-state estimation problem. 
Let us define two Fisher information regions as in Eq.~\eqref{Fisherset}.  
\begin{align}\nonumber
\cJ(\cE_{\rm POVM})&:= \{J_{\bm \theta}[\e]\,|\, \e\in\cE_{\rm POVM} \}, \\ \label{QFisherSet}
\cJ(\Xi)&:= \{\int \xi(d\e)J_{\bm \theta}[\e]\,|\, \xi\in\Xi \},
\end{align}
where $\cE_{\rm POVM}$ denotes the set of all POVMs. 
By definition, $\cJ(\Xi)$ is the convex hull of $\cJ(\cE_{\rm POVM})$, and hence, $\cJ(\cE_{\rm POVM})\subset\cJ(\Xi)$ holds. 
The difference between two sets represents how much we could gain by considering randomized POVMs, 
or considering the continuous design problem in the asymptotic limit. 
It is worth emphasizing that the quantum-state estimation problem is a special case 
in the sense that there is no gap between two strategies. 
The reason behind it is that the general POVM itself contains this kind of randomized POVMs by nature. 
To summarize this result, we have the following result. 
\begin{proposition}\label{prop1}
Two Fisher information regions are identical for the quantum-state estimation problem: 
$\cJ(\cE_{\rm POVM})=\cJ(\Xi)$ 
\end{proposition}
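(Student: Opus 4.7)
The plan is to prove both inclusions, noting that the nontrivial direction $\cJ(\Xi)\subset \cJ(\cE_{\rm POVM})$ rests on a concrete POVM construction that internalizes any randomization.

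First I would dispose of the easy inclusion $\cJ(\cE_{\rm POVM})\subset\cJ(\Xi)$ by taking $\xi$ to be the Dirac measure concentrated on a single POVM $\Pi\in\cE_{\rm POVM}$, which gives $\int\xi(d\e)J_{\bm \theta}[\e]=J_{\bm \theta}[\Pi]$.

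For the reverse inclusion, pick any $J\in\cJ(\Xi)$. By the Carathéodory argument mentioned in Sec.~\ref{sec:contDoE}, $J$ admits a finite representation $J=\sum_{k=1}^{m}\p_k J_{\bm \theta}[\Pi^{(k)}]$ with $\V{\p}\in\cP(m)$ and POVMs $\Pi^{(k)}=\{\Pi^{(k)}_{x_k}\}_{x_k\in\cX_k}$. The key step is to form the scaled union
\begin{equation*}
\Pi := \bigcup_{k=1}^m \p_k \Pi^{(k)} = \big\{\p_k \Pi^{(k)}_{x_k}\big\}_{k,x_k},
\end{equation*}
with outcome label $(k,x_k)$ drawn from the disjoint union $\bigsqcup_k\cX_k$. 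Each element is positive semidefinite, and the completeness relation
\begin{equation*}
\sum_{k=1}^m\sum_{x_k\in\cX_k}\p_k\Pi^{(k)}_{x_k}=\sum_{k=1}^m \p_k\, I_d = I_d
\end{equation*}
shows $\Pi\in\cE_{\rm POVM}$. This is precisely the convex mixture of POVMs already introduced after Eq.~(20) in the paper.

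Next I would verify that $J_{\bm \theta}[\Pi]=J$. The outcome probabilities are $p_{\bm \theta}(k,x_k|\Pi)=\p_k\,\tr{\rho_{\bm \theta}\Pi^{(k)}_{x_k}}=\p_k\, p_{\bm \theta}(x_k|\Pi^{(k)})$, which is exactly the joint distribution in Eq.~\eqref{eq:jointprob}. A direct Fisher information computation then gives
\begin{equation*}
J_{\bm \theta}[\Pi]=\sum_{k,x_k}\frac{(\del_i p_{\bm \theta}(k,x_k|\Pi))(\del_j p_{\bm \theta}(k,x_k|\Pi))}{p_{\bm \theta}(k,x_k|\Pi)}=\sum_k \p_k J_{\bm \theta}[\Pi^{(k)}]=J,
\end{equation*}
where the $\p_k$ in numerator and denominator cancel one power, leaving an overall $\p_k$ weight. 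Hence $J\in\cJ(\cE_{\rm POVM})$, completing the equality.

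I do not expect any genuine obstacle: the only subtle point is to justify the reduction from an abstract probability measure $\xi\in\Xi$ to a finite convex combination before building the union POVM, and this is supplied by the Carathéodory bound $n(n+1)/2+1$ already invoked in Sec.~\ref{sec:contDoE}. The construction then shows that randomization across POVMs is operationally vacuous because it can always be absorbed into a single POVM with finer outcomes, which is the conceptual content of the proposition.
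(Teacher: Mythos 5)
Your proof is correct and follows essentially the same route as the paper: both reduce to showing $\cJ(\Xi)\subset\cJ(\cE_{\rm POVM})$ via the scaled-union POVM $\Pi=\bigcup_k\p_k\Pi^{(k)}$, whose joint outcome distribution $\p_k\,p_{\bm\theta}(x_k|\Pi^{(k)})$ reproduces the convex combination of Fisher information matrices. The only cosmetic difference is that you invoke Carath\'eodory to reduce a general design measure $\xi$ to a finite mixture before building the union, whereas the paper handles the integral form by ``taking an appropriate limit''; your version is, if anything, slightly cleaner on that technical point.
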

\begin{proof}
To prove the statement, it is enough to show the inclusion relation $\cJ(\Xi)\subset\cJ(\cE_{\rm POVM})$, since the converse relation holds by definition. 
Let us consider arbitrary continuous design $\e(m)=(\V{\p},\V{\Pi})$, 
where $\V{\Pi}=(\Pi^{(1)},\Pi^{(2)},\ldots,\Pi^{(m)})$ is a set of $m$ different POVMs. 
The Fisher information matrix is expressed as 
\begin{equation}\label{eq:CFI-contDoE}
J_{\bm \theta}[\e(m)]=\sum_{k=1}^m \p_k J_{\bm \theta}[\Pi^{(k)}]. 
\end{equation} 
Next, consider the following single POVM, 
\begin{equation}
\Pi= \bigcup_{k=1}^m \p_k\Pi^{(k)}.  
\end{equation}
By construction, $\Pi$ is a convex mixture of $m$ different POVMs, which are made up of with $\sum_{k=1}^mx_k$ outcomes in total. 
It is straightforward to show that the above single POVM gives the same classical Fisher information matrix as Eq.~\eqref{eq:CFI-contDoE}. 
The case of an integral form, $\int \xi(d\e)J_{\bm \theta}[\e]$ can be done similarly by taking an appropriate limit. 
In summary, every $J\in \cJ(\Xi)$ is also in the set $\cJ(\cE_{\rm POVM})$, and thus $\cJ(\Xi)\subset\cJ(\cE_{\rm POVM})$ holds. 
\end{proof}

\subsection{Analytically solvable cases}
\subsubsection{Single (scalar) parameter model}\label{sec:1para_opt}
When the number of parameters characterizing quantum states is equal to one, we can find an optimal solution analytically. 
Let $M^Q=\{\rho_{\theta}|{\theta}\in\Theta\subset\bbr\}$ be a one-parameter quantum-state model. 
Then, the well-known property of the Fisher information results in the following inequalities. 
\begin{equation} 
J_{\theta}[\Pi]\le J^{\sld}_{\theta}[\rho_{\theta}]\quad \forall \Pi\in\cE, \label{1paraopt}
\end{equation}
where $J^{\sld}_{\theta}[\rho_{\theta}]$ is the {\it symmetric logarithmic derivative} (SLD) 
Fisher information about the parametric state $\rho_{\theta}$. 

To remind ourselves, the SDL Fisher information matrix about the mixed-state $\rho_{\bm{\theta}}$ is defied as follows. 
Consider a general $n$-parameter family of states $M^Q:=\{\rho_{\bm \theta}\,|\,{\bm \theta}\in\Theta\}$. 
The $i$th direction of SLD operator is defined by the solution of the operator equation 
$\del \rho_{\bm \theta}/\del \theta_i=(\rho_{\bm \theta} L_{{\bm \theta},i}+L_{{\bm \theta},i}\rho_{\bm \theta})/2$.  
The SLD Fisher information matrix about the model $\{\rho_{\bm \theta}\}$ is then defined by 
$J_{\bm \theta}^\sld[\rho_{\bm \theta}]=\big[ \tr{\rho_{\bm \theta}(L_{{\bm \theta},i}L_{{\bm \theta},j}+L_{{\bm \theta},j}L_{{\bm \theta},i})}/2\big]_{i,j}$. 
The SLD Fisher information is a quantum version of the Fisher information and is calculated 
solely by a given parametric quantum state. In the following, 
we denote it as $J_{\bm \theta}^{\sld}=J_{\bm \theta}^{\sld}[\rho_{\bm \theta}]$ for simplicity when no confusion arises. 

An optimal measurement attaining the above equality \eqref{1paraopt} is known.\cite{young,nagaoka87,bc94}. 
Hence, we can bound all possible Fisher information by the optimal one as \eqref{1paraopt}. 
This corresponds to the L\"owner optimal design and hence we can conclude that 
this is the optimal among all possible designs including the mixed strategy. 

\subsubsection{$c$-optimal design}\label{sec:c_opt}
In the literature, the $c$-optimal design for the quantum-state estimation is known, see for example, Chap.~7 of Ref.~\cite{ANbook}. 
\begin{theorem}\label{thm:copt}
Given an $n$-parameter model $M^Q$, for each $n$-dimensional (column) vector $\bm{c}=(c_1,c_2,\dots,c_n)^t\in\bbr^n$, 
the infimum of the MSE matrix in the direction of $\bm{c}$ is 
\be \label{eq:c-Opt}
\inf_{\e\in\cE} \bm{c}^t J_{\bm \theta}[\e]^{-1} \bm{c}
= \bm{c}^t ({J^\sld_{\bm \theta}})^{-1}\bm{c}.
\ee
An optimal measurement is given by a set of projectors about the operator:
\be
L_{{\bm \theta},\bm{c}}=\sum_{i,j=1}^n c_i J_{\bm \theta}^{\sld,ij} L_{{\bm \theta},j}, 
\ee
with $J_{\bm \theta}^{\sld,ij}$ the $i,j$ component of the inverse of the SLD Fisher information matrix and $L_{{\bm \theta},j}$ the SLD operator for the $i$th parameter $\theta_i$. 
\end{theorem}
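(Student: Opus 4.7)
The plan is to split the argument into a scalar-reduction lower bound and an explicit achievability construction, then clarify the generalized-inverse interpretation needed when the optimal measurement yields a rank-deficient classical Fisher information matrix.

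For the \textbf{lower bound}, I would first upgrade the one-parameter CR inequality \eqref{1paraopt} to the matrix inequality $J_{\bm\theta}[\Pi]\le J^\sld_{\bm\theta}$: for any direction $\bm v\in\bbr^n$, the scalar submodel $t\mapsto\rho_{\bm\theta+t\bm v}$ has classical and SLD Fisher informations $\bm v^t J_{\bm\theta}[\Pi]\bm v$ and $\bm v^t J^\sld_{\bm\theta}\bm v$ by the chain rule, so \eqref{1paraopt} yields $\bm v^t J_{\bm\theta}[\Pi]\bm v\le \bm v^t J^\sld_{\bm\theta}\bm v$ for every $\bm v$. Operator monotonicity of $X\mapsto -X^{-1}$ then gives $J_{\bm\theta}[\Pi]^{-1}\ge (J^\sld_{\bm\theta})^{-1}$, and contracting with $\bm c$ on both sides proves the $\ge$ direction of \eqref{eq:c-Opt}.

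For \textbf{achievability}, I would verify directly that the projective measurement of $L_{\bm\theta,\bm c}$ saturates the bound. Using the defining SLD equation $\partial_i\rho_{\bm\theta}=(\rho_{\bm\theta}L_{\bm\theta,i}+L_{\bm\theta,i}\rho_{\bm\theta})/2$ together with the identity $(J^\sld_{\bm\theta})_{ij}=\mathrm{Re}\,\tr{\rho_{\bm\theta}L_{\bm\theta,i}L_{\bm\theta,j}}$, a short computation yields $\tr{\rho_{\bm\theta}L_{\bm\theta,\bm c}}=0$, $\partial_k\tr{\rho_{\bm\theta}L_{\bm\theta,\bm c}}=c_k$ for every $k$, and $\tr{\rho_{\bm\theta}L_{\bm\theta,\bm c}^2}=\bm c^t(J^\sld_{\bm\theta})^{-1}\bm c$. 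Consequently, the estimator $\hat\theta_{\bm c}(\lambda)=\lambda+\bm c^t\bm\theta$ built from the eigenvalue outcome $\lambda$ of the projective measurement $\Pi^{\ast}$ of $L_{\bm\theta,\bm c}$ is locally unbiased for $\theta_{\bm c}=\bm c^t\bm\theta$ at $\bm\theta$ and has variance exactly $\bm c^t(J^\sld_{\bm\theta})^{-1}\bm c$.

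The \textbf{main obstacle} is that $\Pi^{\ast}$ has at most $d$ distinct outcomes, so $J_{\bm\theta}[\Pi^{\ast}]$ is typically rank-deficient when $n>d$ or when $L_{\bm\theta,\bm c}$ has degenerate eigenspaces, and the symbol $J_{\bm\theta}[\Pi^{\ast}]^{-1}$ must be interpreted via the Moore--Penrose pseudoinverse in the spirit of \eqref{def:c-opt_singular}. Under this reading, $\bm c^t J_{\bm\theta}[\Pi^{\ast}]^{-1}\bm c$ coincides with the minimum MSE of locally unbiased estimators of $\theta_{\bm c}$ under $\Pi^{\ast}$, which is realized by the explicit estimator $\hat\theta_{\bm c}$ above and equals $\bm c^t(J^\sld_{\bm\theta})^{-1}\bm c$. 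Combining this with the lower bound closes the chain and establishes the equality \eqref{eq:c-Opt}.
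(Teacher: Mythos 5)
The paper does not actually prove this theorem --- it is quoted from the literature (Chap.~7 of Amari--Nagaoka and Sec.~5 of the cited review), so there is no in-paper argument to compare against; your proof is, however, essentially the standard one from that literature and it is correct. The lower bound via scalarization along every direction $\bm{v}$ (the submodel $t\mapsto\rho_{{\bm\theta}+t\bm{v}}$ has classical Fisher information $\bm{v}^tJ_{\bm\theta}[\Pi]\bm{v}$ and SLD Fisher information $\bm{v}^tJ^{\sld}_{\bm\theta}\bm{v}$ because the SLD equation is linear), combined with \eqref{1paraopt}, gives $J_{\bm\theta}[\Pi]\le J^{\sld}_{\bm\theta}$ and hence the $\ge$ direction of \eqref{eq:c-Opt}; and your three identities $\tr{\rho_{\bm\theta}L_{{\bm\theta},\bm{c}}}=0$, $\tr{(\del\rho_{\bm\theta}/\del\theta_k)\,L_{{\bm\theta},\bm{c}}}=c_k$ and $\tr{\rho_{\bm\theta}L_{{\bm\theta},\bm{c}}^2}=\bm{c}^t(J^{\sld}_{\bm\theta})^{-1}\bm{c}$ all check out, so the eigenvalue estimator is locally unbiased for $\bm{c}^t\bm{\theta}$ with the claimed variance. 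You also correctly identify the one genuine subtlety, which the paper itself discusses right after the theorem: the optimal design is singular, so \eqref{eq:c-Opt} must be read through a generalized inverse as in \eqref{def:c-opt_singular}. Two small refinements: the inversion step needs its generalized-inverse form, namely $0\le J\le K$ with $K>0$ and $\bm{c}\in\mathrm{range}(J)$ imply $\bm{c}^tJ^{-}\bm{c}\ge\bm{c}^tK^{-1}\bm{c}$ (the range condition for $\Pi^{\ast}$ is guaranteed by the very existence of your locally unbiased estimator); and since a projective measurement of a single observable has at most $d$ outcomes, $\mathrm{rank}\,J_{\bm\theta}[\Pi^{\ast}]\le d-1$, so the design is already singular whenever $n\ge d$ (e.g.\ the two-parameter qubit example of Sec.~\ref{sec:local_singular}), not only when $n>d$.
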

This theorem provides an operational meaning of the SLD Fisher information matrix. 
In Sec.~5 of the review~\cite{suzuki2020quantum}, the detailed discussion on this Theorem was given in the context of the nuisance parameter problem. 

In general, the optimal design given in this theorem depends on the unknown parameter $\bm{\theta}$ as well as the choice of the known vector $\bm{c}$. 
Furthermore, the classical Fisher information matrix becomes singular, and hence it is the singular design problem. 
To circumvent the singular design problem, one should solve the refined optimization problem given by Eq.~\eqref{def:c-opt_singular}. 
Otherwise, an obtained optimized design describes purely mathematical one, which is useless. 
We illustrate this point by a simple example in the next subsection. 

\subsection{Local optimal design and singular design}\label{sec:local_singular}
In this subsection, we expand discussions on the issue of local optimal design and singular design, which were briefly presented in Sec.~\ref{sec:misc}. 

Consider a two-parameter qubit model given by
\begin{equation}
M^Q=\left\{\left.\rho_{\bm \theta}=\frac12\left(\begin{array}{cc}1& \theta_2\Exp{-\I \theta_1} \\
\theta_2\Exp{\I \theta_1} & 1\end{array}\right)\,\right|\,(\theta_1,\theta_2)\in[0,2\pi)\times(0,1)\right\}.
\end{equation}
The SLD Fisher information matrix of this model is 
\begin{equation}
J_{\bm{\theta}}^\sld=\left(\begin{array}{cc}\theta_2^2& 0 \\
0 &
\frac{1}{1-\theta_2^2} \end{array}\right)  . 
\end{equation}

When we are only interested in estimating the phase of this state $\theta_1$, 
whereas $\theta_2$ is treated as the nuisance parameter. 
The optimal design for the parameter of interest $\theta_1$ is obtained by the $c$-optimality with $\bm{c}=(1,0)^t$. 
Theorem \ref{thm:copt} provides an optimal projection measurement as
\begin{equation}
\Pi_{*}(\theta_1)=\left\{\frac12\left(\begin{array}{cc}1& \pm\I\Exp{-\I \theta_1} \\
\mp\I\Exp{\I \theta_1} & 1\end{array}\right)  \right\}. 
\end{equation}
Clearly, this measurement depends on the unknown parameter $\theta_1$, 
and hence it is a local optimal design. 
The classical Fisher information matrix of this optimal measurement is 
\begin{equation}\label{eq:ex_optFI}
J^*_{\bm{\theta}}:=J_{\bm{\theta}}[\Pi_{*}(\theta_1)]=\left(\begin{array}{cc}\theta_2^2& 0 \\
0 &0 \end{array}\right)  . 
\end{equation}
Thus, this optimal $\Pi_{*}(\theta_1)$ is the singular design in our terminology. 

First, let us discuss the issue of estimability discussed in Sec.~\ref{sec:misc}.  
The feasibility cone \eqref{def:feasibility_cone} for the parameter $\theta_1=\bm{c}^t\bm{\theta}$ is given by 
\begin{equation}
\cA(\bm{c})=\{A\in\bbr^{2\times2}|A>0\}\cup \{a{\bm{c}}{\bm{c}}^t=\left(\begin{array}{cc}a& 0 \\
0 &0 \end{array}\right)|a>0\}. 
\end{equation}
We see that the Fisher information matrix \eqref{eq:ex_optFI} for the $c$-optimal design $\Pi_{*}(\theta_1)$ is in this feasibility cone. Hence, $\theta_1$ is estimable by this optimal design. 
Next, we touch on the singular design problem. Define the set of all generalized inverse matrices of $J^*_{\bm{\theta}}$ by 
\begin{equation}
\rm{GI}(J^*_{\bm{\theta}}):=\{J^-\in\bbr^{2\times2}|J^*_{\bm{\theta}}J^-J^*_{\bm{\theta}}=J^*_{\bm{\theta}}\}.
\end{equation}
It is easy to obtain the following explicit expression. 
\begin{equation}
\rm{GI}(J^*_{\bm{\theta}})=\left\{\left. \left(\begin{array}{cc}(\theta_2)^{-2}& a \\
b &c \end{array}\right) \right|a,b,c\in\bbr\right\}. 
\end{equation}
Therefore, any generalized inverse of $J^*_{\bm{\theta}}$ attain the optimal value as 
\begin{equation}
\bm{c}^t J^{-} \bm{c}=\bm{c}^t ({J^\sld_{\bm \theta}})^{-1}\bm{c}, \ \forall J^-\in\rm{GI}(J^*_{\bm{\theta}}). 
\end{equation}
This suggests that there are other optimal design whose Fisher information matrix gives 
the same generalized inverse is in the set $\rm{GI}(J^*_{\bm{\theta}})$. 
However, one should only consider an optimal design lying on the feasibility cone, 
otherwise it only gives a meaningless design. 

Finally, we elaborate on local optimality of the design $\Pi_{*}(\theta_1)$. 
In reality, we can only perform an approximated optimal design with uncertainty in $\theta_1$ 
in the finite sample case. 
Upon using $\e_\delta:=\Pi_{*}(\theta_1+\delta)$ with an uncertainty $\delta$ in the knowledge about $\theta_1$, 
the Fisher information matrix and its (Moore-Penrose) generalized inverse is calculated as
\begin{align}
J_{\bm{\theta}}[\e_\delta]
&=\frac{1}{1-\theta_2^2\sin^2\delta}
\left(\begin{array}{c}\theta_2\cos\delta \\\sin\delta \end{array}\right) 
(\theta_2\cos\delta \ \sin\delta), \\
\big(J_{\bm{\theta}}[\e_\delta]\big)^{-1}
&=\frac{1-\theta_2^2\sin^2\delta}{ (\theta_2^2\cos^2\delta+\sin^2\delta)^2}
\left(\begin{array}{c}\theta_2\cos\delta \\\sin\delta \end{array}\right) 
(\theta_2\cos\delta \ \sin\delta).  
\end{align}
By evaluating the $(1,1)$ component of the generalized inverse, we obtain
\begin{equation}\label{eq:sec3-6}
\Psi_{\bm{c}}[\e_\delta]=\bm{c}^t \big(J_{\bm{\theta}}[\e_\delta]\big)^{-1} \bm{c}
= \frac{1-\theta_2^2\sin^2\delta}{ (\theta_2^2\cos^2\delta+\sin^2\delta)^2} \theta_2^2\cos^2\delta. 
\end{equation}
We can show that $\Psi_{\bm{c}}[\e_\delta]$ can be lower than its optimal value 
$\Psi_{\bm{c}}[\e_*]=(\theta_2)^{-2}$ for $\delta\neq0$. 
For example, consider the case of small $\delta$, then the Taylor series expansion gives
\begin{equation}
\Psi_{\bm{c}}[\e_\delta]\simeq\frac{1}{\theta_2^{2}}
- \frac{1}{\theta_2^{2}}\left[\theta_2^2+\frac{2}{\theta_2^2}- 1 \right]\delta^2 . 
\end{equation}
The second term is always negative. 
In fact, the optimal value for $\Psi_{\bm{c}}[\e_\delta]$ is zero, which is attained by 
a choice $\delta=\pi/2$ in Eq.~\eqref{eq:sec3-6}. 
To resolve this contradictory statement, we again need to impose the estimability condition. 
The parameter $\theta_1$ is estimable by the design $\e_\delta$, if and only if 
its Fisher information matrix is in the feasibility cone $\cA(\bm{c})$. 
This condition singles out the true optimal design with $\delta=0$.  

\section{Qubit model}\label{sec:qubit}
In this section, we consider a general qubit model; $M^Q=\{\rho_{\bm \theta}\in\sofctwo\,|\,{\bm \theta}\in\Theta\subset\bbr^n\}$. 
The single parameter case is solved in Sec.~\ref{sec:1para_opt}, and we consider two or three parameter models ($n=2,3$). 
For two- or three-parameter models, we can easily show that there cannot be 
L\"owner optimal design except for trivial cases. 

For the qubit model, a key observation is the following lemma (See Ref.~\cite{yamagata11} for the proof.). 
\begin{lemma}\label{lem1}
For a qubit model, $\{\rho_{\bm \theta}|{\bm \theta}\in\Theta\}$, the Fisher information matrix 
about given a measurement $\Pi$ can be expressed in the form as 
$J_{\bm \theta}[\Pi]=\sqrt{J_{\bm \theta}^{\sld}}J\sqrt{J_{\bm \theta}^{\sld}}$ where 
$J$ is some nonnegative-definite matrix satisfying the condition $\mathrm{Tr}\{J\}\le1$.
\end{lemma}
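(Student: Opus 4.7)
The plan is to exploit the Bloch parametrization, reduce both Fisher information matrices to bilinear forms on $\bbr^{3}$, and then recognize the claim as essentially the Gill--Massar trace inequality, which for qubits can be established by a short direct estimate.

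First I would write $\rho_{\bm \theta}=\tfrac12\bigl(I+\vec s(\bm\theta)\cdot\vec\sigma\bigr)$ and expand each POVM element as $\Pi_x=a_x I+\vec b_x\cdot\vec\sigma$; the condition $\Pi_x\ge 0$ gives $a_x\ge|\vec b_x|$, while $\sum_x\Pi_x=I$ gives $\sum_x a_x=1$ and $\sum_x\vec b_x=0$. Then $p_{\bm\theta}(x|\Pi)=a_x+\vec b_x\cdot\vec s$, so
\begin{equation*}
J_{\bm\theta}[\Pi]=D^{T}MD,\qquad M:=\sum_x\frac{\vec b_x\vec b_x^{T}}{a_x+\vec b_x\cdot\vec s},
\end{equation*}
where $D$ is the $3\times n$ Jacobian with columns $\partial_i\vec s$. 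Solving $\partial_i\rho_{\bm\theta}=\tfrac12\{L_i,\rho_{\bm\theta}\}$ in the Pauli basis yields $\vec L_i=(I-\vec s\vec s^{T})^{-1}\partial_i\vec s$ and hence $J^{\sld}_{\bm\theta}=D^{T}GD$ with $G:=(I-\vec s\vec s^{T})^{-1}$.

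Next, the decomposition $J_{\bm\theta}[\Pi]=\sqrt{J^{\sld}_{\bm\theta}}\,J\,\sqrt{J^{\sld}_{\bm\theta}}$ is automatic by taking $J:=(J^{\sld}_{\bm\theta})^{-1/2}J_{\bm\theta}[\Pi](J^{\sld}_{\bm\theta})^{-1/2}\ge 0$; the real content of the lemma is the trace bound $\mathrm{Tr}\{J\}=\mathrm{Tr}\{(J^{\sld}_{\bm\theta})^{-1}J_{\bm\theta}[\Pi]\}\le 1$. Setting $\tilde D:=G^{1/2}D$ and letting $\tilde P:=\tilde D(\tilde D^{T}\tilde D)^{-1}\tilde D^{T}$ be the orthogonal projector onto the range of $\tilde D$, a cyclic rearrangement gives
\begin{equation*}
\mathrm{Tr}\{(J^{\sld}_{\bm\theta})^{-1}J_{\bm\theta}[\Pi]\}=\mathrm{Tr}\{\tilde P\,G^{-1/2}MG^{-1/2}\}\le\mathrm{Tr}\{G^{-1}M\},
\end{equation*}
using $\tilde P\le I$ and $G^{-1/2}MG^{-1/2}\ge 0$. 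This reduces the problem to proving $\mathrm{Tr}\{(I-\vec s\vec s^{T})M\}\le 1$, uniformly in $n$.

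Finally, I would establish the last inequality by direct calculation:
\begin{equation*}
\mathrm{Tr}\{(I-\vec s\vec s^{T})M\}=\sum_x\frac{|\vec b_x|^{2}-(\vec s\cdot\vec b_x)^{2}}{a_x+\vec s\cdot\vec b_x},
\end{equation*}
and the identity $|\vec b_x|^{2}-(\vec s\cdot\vec b_x)^{2}=(a_x+\vec s\cdot\vec b_x)(a_x-\vec s\cdot\vec b_x)-(a_x^{2}-|\vec b_x|^{2})$, together with the POVM positivity $a_x^{2}\ge|\vec b_x|^{2}$, bounds each summand by $a_x-\vec s\cdot\vec b_x$. Summing and using $\sum_x a_x=1$ and $\sum_x\vec b_x=0$ yields the claim. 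The step I expect to be trickiest is handling submodels with $n<3$, where $D$ has no inverse; the projection argument above is the cleanest way to avoid case splitting, and it makes clear that the qubit-specific factor $G^{-1}=I-\vec s\vec s^{T}$ (rather than the naive $I$) is what delivers the sharp constant $1$.
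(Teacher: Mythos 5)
Your proof is correct: the Bloch-sphere reduction $J_{\bm\theta}[\Pi]=D^{T}MD$, $J^{\sld}_{\bm\theta}=D^{T}(I-\vec s\vec s^{T})^{-1}D$, the projection step that handles $n<3$ cleanly, and the final estimate bounding each summand by $a_x-\vec s\cdot\vec b_x$ via $a_x^2\ge|\vec b_x|^2$ are all sound (modulo the standard convention of dropping outcomes with $p_{\bm\theta}(x|\Pi)=0$, which for $|\vec s|<1$ only arise from $\Pi_x=0$). The paper itself gives no proof and defers to the cited reference of Yamagata, whose argument is essentially this same Gill--Massar-type computation in the Bloch parametrization, so your route matches the intended one.
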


This immediately yields the following corollary, known as the Gill-Massar inequality.\cite{gm00} 
\begin{corollary} \label{cor_gm}
In a qubit system, the Fisher information matrix for any design $\xi$ satisfies,
\be
\Tr{\big(J_{\bm \theta}^{\sld}\big)^{-1}J(\xi)}\le1,
\ee
where the equality holds if and only if a measurement consists of rank-1 operators. 
\end{corollary}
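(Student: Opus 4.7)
The plan is to obtain Corollary \ref{cor_gm} as an essentially one-line consequence of Lemma \ref{lem1}, extended to randomised designs by linearity, and then to address the equality case separately.

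First, I would fix a single POVM $\Pi$ and invoke Lemma \ref{lem1} to write $J_{\bm\theta}[\Pi]=\sqrt{J_{\bm\theta}^{\sld}}\,J\,\sqrt{J_{\bm\theta}^{\sld}}$ with $J\ge 0$ and $\mathrm{Tr}\{J\}\le 1$. Since $J_{\bm\theta}^{\sld}$ is assumed invertible (we are in the regular regime of the qubit model), I can apply cyclicity of the trace to obtain
\begin{equation*}
\Tr{\bigl(J_{\bm\theta}^{\sld}\bigr)^{-1}J_{\bm\theta}[\Pi]}
=\Tr{\bigl(J_{\bm\theta}^{\sld}\bigr)^{-1}\sqrt{J_{\bm\theta}^{\sld}}\,J\,\sqrt{J_{\bm\theta}^{\sld}}}
=\Tr{J}\le 1,
\end{equation*}
which is already the desired inequality for a deterministic design.

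Second, I would lift this to a general design measure $\xi\in\Xi$. Since $J(\xi)=\int \xi(d\Pi)\,J_{\bm\theta}[\Pi]$ and the map $A\mapsto\Tr{(J_{\bm\theta}^{\sld})^{-1}A}$ is linear, I can swap the trace and the integral and bound the integrand pointwise by the single-POVM result:
\begin{equation*}
\Tr{\bigl(J_{\bm\theta}^{\sld}\bigr)^{-1}J(\xi)}
=\int \xi(d\Pi)\,\Tr{\bigl(J_{\bm\theta}^{\sld}\bigr)^{-1}J_{\bm\theta}[\Pi]}
\le \int \xi(d\Pi)\cdot 1 = 1.
\end{equation*}
This gives the inequality part of the corollary essentially for free, since $\xi$ is a probability measure on the design space.

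For the equality condition I expect the real work to lie. Equality in the displayed bound is equivalent to $\mathrm{Tr}\{J\}=1$ for every POVM in the support of $\xi$, so I need to trace the inequality $\mathrm{Tr}\{J\}\le 1$ in Lemma \ref{lem1} back to its source and characterise when it is saturated. The standard derivation of that lemma writes each POVM element $\Pi_x$ as a linear combination $c_x I + \V v_x\cdot\V\sigma$ in the Pauli basis, expresses the classical Fisher components via $\V v_x$, and then applies the inequality $\|\V v_x\|^2\le c_x$ coming from $\Pi_x\ge 0$, with equality precisely when $\Pi_x$ is rank-$1$ (i.e.\ proportional to a pure-state projector). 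Summing the rank condition over $x$ and invoking $\sum_x\Pi_x=I$ yields saturation of $\mathrm{Tr}\{J\}=1$ iff every element of $\Pi$ has rank one. Carrying this to $\xi$, equality holds iff $\xi$ is supported on rank-$1$ POVMs, which is the statement of the corollary.

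The main obstacle is therefore not the inequality itself, which is an immediate corollary of Lemma \ref{lem1} plus cyclicity, but the careful bookkeeping of the equality case, where one must show that the individual-POVM rank-$1$ condition lifts correctly to an arbitrary design measure and that no cancellation in the integral can produce a spurious equality when some $\Pi$ in the support is not rank-$1$. The latter is ruled out because the integrand $\Tr{(J_{\bm\theta}^{\sld})^{-1}J_{\bm\theta}[\Pi]}$ is a.s.\ equal to $1$ under $\xi$ iff it is equal to $1$ for $\xi$-a.e.\ $\Pi$, reducing the problem back to the deterministic case handled by Lemma \ref{lem1}.
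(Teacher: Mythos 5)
Your argument is correct and takes essentially the same route the paper intends: the inequality follows immediately from Lemma~\ref{lem1} by cyclicity of the trace and linearity of $\xi\mapsto J(\xi)$, and the equality case is correctly traced back to the rank-one saturation condition inside the (cited) proof of that lemma. One cosmetic slip in your sketch of that saturation argument: positivity of $\Pi_x=c_x I+\V{v}_x\cdot\V{\sigma}$ (eigenvalues $c_x\pm\|\V{v}_x\|$) gives $\|\V{v}_x\|\le c_x$ rather than $\|\V{v}_x\|^2\le c_x$, with rank one exactly at equality.
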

Another important property is the following lemma. 
\begin{lemma}\label{lem2}
For any qubit model $M^Q=\{\rho_{\bm \theta}|{\bm \theta}\in\Theta\}$, 
let $T$ be a positive matrix, then the following optimization has the solution: 
\be
\max_{\e\in\cE}\Tr{J_{\bm \theta}[\e] T}=
\lambda_{\max}( J_{\bm \theta}^{\sld}T) ,
\ee
where the $\lambda_{\max}(A)$ denotes the maximum eigenvalue of the matrix $A$. 
\end{lemma}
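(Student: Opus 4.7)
The plan is to build on Lemma~\ref{lem1}, which identifies the set of attainable classical Fisher information matrices of a qubit model with matrices of the form $J_{\bm \theta}[\Pi]=\sqrt{J_{\bm \theta}^{\sld}}\,J\,\sqrt{J_{\bm \theta}^{\sld}}$ where $J\ge 0$ and $\mathrm{Tr}\{J\}\le 1$. Via the cyclic property of the trace, the quantity to be maximized takes the form
\begin{equation*}
\Tr{J_{\bm \theta}[\Pi]\,T}=\Tr{J\,\sqrt{J_{\bm \theta}^{\sld}}\,T\,\sqrt{J_{\bm \theta}^{\sld}}}=\Tr{J\,A},
\end{equation*}
where $A:=\sqrt{J_{\bm \theta}^{\sld}}\,T\,\sqrt{J_{\bm \theta}^{\sld}}\ge 0$. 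Since $A$ is obtained from $J_{\bm \theta}^{\sld}T$ by a similarity transformation ($A=\sqrt{J_{\bm \theta}^{\sld}}(J_{\bm \theta}^{\sld}T)\sqrt{J_{\bm \theta}^{\sld}}^{-1}$, assuming $J_{\bm \theta}^{\sld}>0$; otherwise a limiting/invariance argument applies), the two matrices share the same spectrum, so $\lambda_{\max}(A)=\lambda_{\max}(J_{\bm \theta}^{\sld}T)$.

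Next, I would invoke the elementary variational fact that for $A\ge 0$,
\begin{equation*}
\max\{\Tr{JA}\,:\,J\ge 0,\ \Tr{J}\le 1\}=\lambda_{\max}(A),
\end{equation*}
with the maximum attained by the rank-one projector onto a top eigenvector $v$ of $A$, that is, $J_*=vv^{t}$ with $\|v\|=1$. Combining this with Lemma~\ref{lem1} immediately yields the upper bound $\Tr{J_{\bm \theta}[\Pi]\,T}\le \lambda_{\max}(J_{\bm \theta}^{\sld}T)$ for every POVM~$\Pi$.

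The hard part is the achievability (reverse direction of Lemma~\ref{lem1}): exhibiting a concrete POVM $\Pi_*$ on a qubit whose Fisher information matrix realizes the optimal $J_*=vv^{t}$ (with equality $\mathrm{Tr}\{J_*\}=1$). Here I would appeal to Corollary~\ref{cor_gm}, which says equality in the Gill--Massar inequality requires a rank-$1$ POVM, and construct such a POVM explicitly in the Bloch representation: parametrize rank-one elements as $\Pi_x=w_x(I+\vec{n}_x\cdot\vec{\sigma})$ with unit vectors $\vec{n}_x$ and positive weights $w_x$ summing consistently with $\sum_x \Pi_x=I$, and choose the directions $\vec{n}_x$ and weights $w_x$ so that the induced Fisher information matrix equals $\sqrt{J_{\bm \theta}^{\sld}}\,v v^t\,\sqrt{J_{\bm \theta}^{\sld}}$. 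Because this target is rank one, a two-outcome projective measurement associated with the vector $v$ transported back through $\sqrt{J_{\bm \theta}^{\sld}}$ suffices; this is essentially the same construction that underlies the $c$-optimal theorem (Theorem~\ref{thm:copt}). In fact, setting $\bm c=\sqrt{J_{\bm \theta}^{\sld}}\,v$ and applying the $c$-optimal construction produces a projective POVM attaining the bound.

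Finally, I would combine the upper bound from Lemma~\ref{lem1} with the matching construction to conclude the equality $\max_{\e\in\cE}\Tr{J_{\bm \theta}[\e]\,T}=\lambda_{\max}(J_{\bm \theta}^{\sld}T)$, noting that restricting the maximum to single POVMs rather than design measures is justified by Proposition~\ref{prop1}. The main technical obstacle is verifying the converse of Lemma~\ref{lem1}, i.e., that every $J\ge 0$ with $\mathrm{Tr}\{J\}\le 1$ is realized by some qubit POVM; for the rank-one $J_*$ arising above, this reduces to the concrete rank-one projective construction just sketched.
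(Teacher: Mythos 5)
Your argument is correct and is essentially the approach the paper relies on: Lemma~\ref{lem1} reduces the problem to maximizing $\mathrm{Tr}\{JA\}$ with $A=\sqrt{J_{\bm \theta}^{\sld}}\,T\,\sqrt{J_{\bm \theta}^{\sld}}$ over $J\ge 0$, $\mathrm{Tr}\{J\}\le 1$, giving $\lambda_{\max}(A)=\lambda_{\max}(J_{\bm \theta}^{\sld}T)$, and attainability of the rank-one optimizer is exactly the projective-measurement construction of Eq.~\eqref{sld_pvm} with $\bm{u}$ the top eigenvector of $A$. (One cosmetic slip: the similarity is $A=(J_{\bm \theta}^{\sld})^{-1/2}\,(J_{\bm \theta}^{\sld}T)\,(J_{\bm \theta}^{\sld})^{1/2}$, with the exponents opposite to what you wrote, but the spectral conclusion is unaffected.)
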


\subsection{Fisher information region for a qubit model}\label{sec:qubit-FIregion}
We can apply Lemma \ref{lem1} to obtain the Fisher information region \eqref{Fisherset}, the set of all possible Fisher information matrices:
\begin{equation}\label{qFisherset}
\cJ(\Xi)=\{\sqrt{J_{\bm \theta}^{\sld}}J\sqrt{J_{\bm \theta}^{\sld}}\,|\,J\ge0,\mathrm{Tr}\{J\}\le1\}.
\end{equation}

To better understanding, we have an explicit construction of the Fisher information matrix 
based on the SLD operator. 
Let $L_{{\bm \theta},i}$ be the $i$th direction of the SLD operator. 
Given a unit vector $\bm{u}=(u^i)\in \bbr^n$ ($\bm{u}^t\bm{u}=|\bm{u}|^2=1$), 
performing a projection measurement about an observable, 
\be \label{sld_pvm}
L_{\bm{u}}:=\sum_{i,j}u^i (J_{\bm \theta}^\sld)^{-1/2}_{ij}L_{{\bm \theta},j},
\ee
yields the following form of the Fisher information matrix:
\begin{equation}\nonumber
J_{\bm \theta}[\Pi(L_{\bm u})]=\sqrt{J_{\bm \theta}^{\sld}}\bm{u} \bm{u}^t\sqrt{J_{\bm \theta}^{\sld}}, 
\end{equation}
which is rank-1. Consider a general experimental design $\e(n)=(\V{\p},\V{\e})$ for the $n$-parameter case 
of the form $\V{\p}=(\p_1,\dots,\p_n)\in\cP(n)$ and $\V{\e}=\V{\u}:=(\bm{u}_1,\dots,\bm{u}_n)$ 
$\stackrel{{1 \mathrm{to} 1}}{\leftrightarrow}
(\Pi(L_{\bm{u}_1}),\dots,\Pi(L_{\bm{u}_n})) $. (Note here that $\Pi(L_{\bm{u}}))$ is uniquely specified by a unit vector $\bm{u}$.) 
The Fisher information matrix for this design is
\begin{equation} \label{FisherE3}
J_{\bm \theta}[\e(n)]=\sum_{i=1}^n \p_i \sqrt{J_{\bm \theta}^{\sld}}\bm{u}_i \bm{u}_i^t\sqrt{J_{\bm \theta}^{\sld}}. 
\end{equation} 
The matrix $J=\sum_i \p_i \bm{u}_i \bm{u}_i^t$ satisfies $\mathrm{Tr}\{J\}=1$ and can span 
all possible nonnegative-definite matrices $J$ appearing in Eq.~\eqref{qFisherset}. 
Thus, we can set $n$ vectors $\bm{u}_1,\dots,\bm{u}_n$ to be orthogonal to each other 
to optimize the function $\Psi(J_{\bm \theta}[\e(n)])$ over $\V{\p}$ and $\V{\u}$. 
That is, $\V{\u}$ forms an orthonormal basis of $\bbr^n$. 
We can confirm that the design $\e(n)$ for the $n$-parameter case can achieve optimal design among all possible $\e(m)$ with $m\in\bbn$, that is, $\e_*(m_*)=\e_*(n)$ holds. 


Combining discussions above, we arrived at the following statement. 
\begin{proposition}
For any qubit model, let $\cJ(\Xi)$ be the Fisher information region for all possible designs, 
and denote by $\cJ(\cE_{\rm PVM})$ the Fisher information region set by a convex mixture of all possible projection measurements. 
Two Fisher information regions are identical for the quantum-state estimation problem: 
$\cJ(\cE_{\rm PVM})=\cJ(\Xi)$. 
\end{proposition}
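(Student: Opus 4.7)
The plan is to combine Lemma~\ref{lem1} with the explicit PVM construction recorded in Eqs.~\eqref{sld_pvm}--\eqref{FisherE3}. One inclusion, $\cJ(\cE_{\rm PVM})\subseteq\cJ(\Xi)$, is immediate: a convex mixture of projective measurements is a particular design measure supported on PVMs, so the corresponding Fisher information matrix already lies in $\cJ(\Xi)$. The nontrivial direction is $\cJ(\Xi)\subseteq\cJ(\cE_{\rm PVM})$, for which I will show that every matrix of the form permitted by Lemma~\ref{lem1} is realized as the mixture in Eq.~\eqref{FisherE3}.

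Concretely, fix any $J^*\in\cJ(\Xi)$. By Lemma~\ref{lem1} and Eq.~\eqref{qFisherset}, $J^*=\sqrt{J_{\bm\theta}^\sld}\,J\,\sqrt{J_{\bm\theta}^\sld}$ with $J\ge 0$ and $\mathrm{Tr}(J)\le 1$. I would spectrally decompose $J=\sum_{i=1}^n\mu_i\bm{u}_i\bm{u}_i^t$ with $\{\bm u_i\}$ an orthonormal basis of $\bbr^n$ and $\mu_i\ge 0$. For each $\bm u_i$, the self-adjoint observable $L_{\bm{u}_i}$ built from the SLDs in Eq.~\eqref{sld_pvm} has a PVM $\Pi(L_{\bm u_i})$ whose Fisher information matrix is the rank-one matrix $\sqrt{J_{\bm\theta}^\sld}\,\bm{u}_i\bm{u}_i^t\,\sqrt{J_{\bm\theta}^\sld}$. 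Assembling these via a mixed-strategy design with weights $\mu_i$ reproduces $J^*$ whenever $\sum_i\mu_i=1$, exactly as in Eq.~\eqref{FisherE3}.

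The only point that needs care is the case $\tau:=\mathrm{Tr}(J)<1$, which arises precisely when the underlying POVM is not rank one (Corollary~\ref{cor_gm}). To handle it, I will append to the mixture the trivial one-outcome PVM $\{I_2\}$ with weight $1-\tau$; its Fisher information is the zero matrix, so adding it does not alter the achieved information while making the weights a probability vector. This produces $J^*$ as an element of $\cJ(\cE_{\rm PVM})$ and establishes the reverse inclusion. The main (mild) obstacle is simply noticing that this ``padding'' by a zero-information PVM is available within $\cE_{\rm PVM}$; once that observation is in hand, the argument is entirely mechanical from the preceding machinery.
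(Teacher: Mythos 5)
Your proposal is correct and follows essentially the same route as the paper: invoke Lemma~\ref{lem1} to reduce $\cJ(\Xi)$ to the set $\{\sqrt{J_{\bm\theta}^{\sld}}J\sqrt{J_{\bm\theta}^{\sld}}\,|\,J\ge0,\ \mathrm{Tr}\{J\}\le1\}$ and then realize each such matrix by mixing the rank-one PVMs of the observables $L_{\bm u}$ from Eq.~\eqref{sld_pvm} as in Eq.~\eqref{FisherE3}. Your explicit padding by the trivial one-outcome PVM to cover the case $\mathrm{Tr}\{J\}<1$ is a small but genuine improvement in rigor, since the paper's construction only produces trace-one matrices and leaves that case implicit.
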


\subsection{Analytical forms of optimal designs}
In this subsection, we study $A$-, $D$-, $E$-, and $\gamma$-optimal design. 
Each optimal design is constructed as randomized mixture of PVMs in consistent to the above proposition. 
We first derive the $\gamma$-optimal design, and then we list $A$-, $D$-, and $E$-designs. 

\subsubsection{$\gamma$-optimal design}
We finally construct the $\gamma$-optimal design for the qubit model. 
Its optimality function is $\Psi_\gamma[J_{\bm \theta}[\e(n)]]=\big(\frac{1}{n}\mathrm{Tr}\{J_{\bm \theta}[\e(n)]^{-\gamma}\} \big)^{1/\gamma}$ ($\gamma\in\bbr$). 
The result is given as follows. 
\begin{theorem}\label{theorem_gamma}
Given an $n$-parameter qubit model ($n=2,3$), an optimal design $\e_*(n)$ and 
the minimum $\gamma$-optimality function ($\gamma\neq0$) are given by
\begin{align}\nonumber
\min_{\e(n)}&\Psi_\gamma[J_{\bm \theta}[\e(n)]]= \frac{1}{n^\gamma} 
\big(\mathrm{Tr}\{(J_{\bm \theta}^{\sld})^{-\frac{\gamma}{\gamma+1}}\} \big)^{\frac{\gamma+1}{\gamma}},\\ \nonumber
\V{\p}_*&=(\p_i) \mbox{ with } \p_i=(\lambda_i^\sld)^{-\frac{\gamma}{\gamma+1}}/\sum_j (\lambda_j^\sld)^{-\frac{\gamma}{\gamma+1}} ,\\ \label{gamma-opt}
\V{u}_*&=(\bm{u}_i^\sld),
\end{align}
where $\lambda_i^\sld$ and $\bm{u}_i^\sld$ are the eigenvalues and eigenvectors of the SLD Fisher information matrix. 
The necessary and sufficient condition for the optimal design $\e_*$ is that 
the Fisher information matrix for $\e_*$ satisfies 
\begin{equation}\label{eq:thm-gamma}
J_{\rm \theta}[\e_*]=\frac{1}{\mathrm{Tr}\left\{(J_{\bm \theta}^{\sld})^{-\frac{\gamma}{\gamma+1}}\right\}}(J_{\bm \theta}^{\sld})^{\frac{1}{\gamma+1}}. 
\end{equation}
\end{theorem}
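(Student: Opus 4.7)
My plan is to leverage the parametrization of the Fisher information region established in Lemma~\ref{lem1} and Eq.~\eqref{qFisherset}: for a qubit, every attainable Fisher information matrix has the form $J_{\bm\theta}[\e]=\sqrt{J_{\bm\theta}^{\sld}}\,M\,\sqrt{J_{\bm\theta}^{\sld}}$ with $M\ge 0$ and $\mathrm{Tr}\{M\}\le 1$, and the trace bound is tight for rank-1 PVM mixtures (Corollary~\ref{cor_gm}, Eq.~\eqref{FisherE3}). Because $x\mapsto x^{-\gamma}$ is strictly monotone, the optimum must saturate $\mathrm{Tr}\{M\}=1$, so the problem reduces to minimizing $\mathrm{Tr}\{(\sqrt{J_{\bm\theta}^{\sld}}M\sqrt{J_{\bm\theta}^{\sld}})^{-\gamma}\}$ over $\{M\ge 0:\mathrm{Tr}\{M\}=1\}$, where we write $M=\sum_i\p_i\bm{u}_i\bm{u}_i^t$.

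The structural core of the proof is to show that the optimal $M$ is diagonal in the SLD eigenbasis, which fixes $\V{u}_*$. Diagonalize $J_{\bm\theta}^{\sld}=ODO^t$ with $D=\mathrm{diag}(\lambda_i^{\sld})$ and set $\tilde M=O^tMO$; then $J_{\bm\theta}[\e]$ is orthogonally similar to $D^{1/2}\tilde M D^{1/2}$, whose diagonal entries are $\lambda_i^{\sld}\tilde m_{ii}$. By the Schur--Horn theorem the diagonal of a Hermitian matrix is majorized by its spectrum, and by convexity of $x\mapsto x^{-\gamma}$ on $(0,\infty)$ for $\gamma>0$ (and for $\gamma\le-1$), Karamata's inequality yields
\[
\mathrm{Tr}\{(D^{1/2}\tilde M D^{1/2})^{-\gamma}\}\ \ge\ \sum_i(\lambda_i^{\sld}\tilde m_{ii})^{-\gamma},
\]
with equality iff $\tilde M$ is diagonal. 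Hence $\V{u}_*$ must be the SLD eigenbasis.

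The residual problem is the scalar minimization of $\sum_i(\lambda_i^{\sld})^{-\gamma}\p_i^{-\gamma}$ under $\sum_i\p_i=1$, resolved by a one-line Lagrange-multiplier computation that produces $\p_i\propto(\lambda_i^{\sld})^{-\gamma/(\gamma+1)}$ as in~\eqref{gamma-opt}; substituting back produces the stated minimum value. For the matrix characterization~\eqref{eq:thm-gamma}, at the optimum the diagonal entries of $J_{\bm\theta}[\e_*]$ in the SLD eigenbasis are $\lambda_i^{\sld}\p_i=(\lambda_i^{\sld})^{1/(\gamma+1)}/Z$ with $Z=\mathrm{Tr}\{(J_{\bm\theta}^{\sld})^{-\gamma/(\gamma+1)}\}$, which coincides with the diagonal of $Z^{-1}(J_{\bm\theta}^{\sld})^{1/(\gamma+1)}$; strict convexity of the scalar objective on the open simplex then makes this condition both necessary and sufficient.

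The main obstacle I foresee is the Schur--Horn plus convexity step combined with the sign of $\gamma$: for $\gamma\in(-1,0)$ the map $x^{-\gamma}$ is concave, reversing the majorization inequality, so that sub-range needs a separate treatment (for instance via operator concavity of $A\mapsto\mathrm{Tr}\{A^{\alpha}\}$ for $0<\alpha<1$) or the theorem's scope must be restricted. The limit $\gamma\to 0$ recovering $D$-optimality, and degenerate SLD spectra where the optimal basis can be chosen freely within eigenspaces, should be flagged as separate subcases.
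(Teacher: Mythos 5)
Your proposal is correct in substance but follows a genuinely different route from the paper. The paper starts from the same Yamagata parametrization $J_{\bm\theta}[\e]=\sqrt{J_{\bm\theta}^{\sld}}\,J\,\sqrt{J_{\bm\theta}^{\sld}}$, $\mathrm{Tr}\{J\}\le1$, but then performs an unconstrained matrix variation of the Lagrangian $f(J)=\mathrm{Tr}\{(SJ^{-1}S)^\gamma\}+\lambda(\mathrm{Tr}\{J\}-1)$ with $S=(J_{\bm\theta}^{\sld})^{-1/2}$, reads off the stationary point $J_*\propto(J_{\bm\theta}^{\sld})^{-\gamma/(\gamma+1)}$ directly, and only afterwards decomposes $J_*=\sum_i\p_i\bm{u}_i\bm{u}_i^t$ to exhibit the design; optimality of the stationary point and the ``necessary and sufficient'' claim are left to implicit convexity. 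Your Schur--Horn/Karamata reduction buys something the paper does not spell out: an explicit proof that the optimal $M$ must be diagonal in the SLD eigenbasis (with the equality case of the majorization inequality forcing vanishing off-diagonals), which is exactly what is needed to make Eq.~\eqref{eq:thm-gamma} a genuinely necessary condition; the residual scalar Lagrange step then matches the paper's weights $\p_i\propto(\lambda_i^{\sld})^{-\gamma/(\gamma+1)}$. One correction to your closing caveat: the subrange you flag, $\gamma\in(-1,0)$, is actually unproblematic, because there minimizing $\Psi_\gamma=(\frac1n\mathrm{Tr}\{J^{-\gamma}\})^{1/\gamma}$ means \emph{maximizing} $\mathrm{Tr}\{J^{|\gamma|}\}$, and the concavity of $x^{|\gamma|}$ reverses the majorization inequality in precisely the direction you need, so the diagonal reduction survives. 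The genuinely delicate range is $\gamma<-1$, where one maximizes the trace of a convex power and the interior stationary point is no longer the optimum (the supremum degenerates toward rank-one, i.e.\ singular, designs); neither your argument nor the paper's variational one addresses this, so the theorem should really be read as holding for $\gamma>-1$ (which covers the $A$-, $D$-, and $E$-limits of interest).
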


\begin{proof}
We extend the proof used in Ref.~\cite{yamagata11}. 
First note that it suffices to find the minimum for $\mathrm{Tr}\{J_{\bm \theta}[\e(n)]^{-\gamma}\}$. 
Consider the functional of $n\times n$ positive matrix $J>0$:
\begin{equation}
f(J):=\Tr{ (S J^{-1}S)^\gamma}+\lambda \left(\Tr{J}-1\right), 
\end{equation}
where $S=(J_{\bm \theta}^{\sld})^{-1/2}$ and $\lambda$ is the Lagrange multiplier. 
Taking a variation with respect to $J$ gives 
\begin{align*}
\delta f(J)&=\Tr{-(SJ^{-1}S)^\gamma\gamma(S^{-1}\delta JS^{-1})(S^{-1}JS^{-1})^{\gamma-1}(SJ^{-1}S)^\gamma}+\lambda\Tr{\delta J}\\
&=-\Tr{\left[\gamma S^{-1}(SJ^{-1}S)^{\gamma+1}S^{-1}-\lambda I  \right] \delta J}. 
\end{align*}
Therefore, the stationary condition yields the relation. 
\begin{align*}
\gamma S^{-1}(SJ_*^{-1}S)^{\gamma+1}S^{-1}-\lambda I=0
&\Lra (SJ_*^{-1}S)^{\gamma+1}=\frac{\lambda}{\gamma}S^2\\
&\Lra J_*=\left(\frac{\lambda}{\gamma}\right)^{\frac{1}{\gamma+1}} (J_{\bm \theta}^{\sld})^{-\frac{\gamma}{\gamma+1}}. 
\end{align*}
The condition $\Tr{J_*}=1$ determines $\lambda$ as 
\[
\lambda=\gamma \left(\mathrm{Tr}\left\{(J_{\bm \theta}^{\sld})^{-\frac{\gamma}{\gamma+1}}\right\}\right)^{\gamma+1}, 
\]
and the Fisher information matrix 
for the optimal design $\e_*$ is obtained as 
\begin{align*}
J_{\rm \theta}[\e_*]&=\sqrt{J_{\bm \theta}^{\sld}}J_*\sqrt{J_{\bm \theta}^{\sld}}\\
&=\frac{1}{\mathrm{Tr}\left\{(J_{\bm \theta}^{\sld})^{-\frac{\gamma}{\gamma+1}}\right\}}(J_{\bm \theta}^{\sld})^{\frac{1}{\gamma+1}}.
\end{align*}
This is equivalent to the condition \eqref{eq:thm-gamma}. 
This expression immediately gives expression for $\Psi_\gamma[J_{\bm \theta}[\e_*(n)]]$ in the theorem. 
To find an optimal design, we can solve $J_*=\sum_i \p_i \bm{u}_i \bm{u}_i^t$. 
It is straightforward to check the optimal design $\e_*(n)$ given in the theorem satisfies 
this relation. 
\end{proof}

\subsubsection{$A$-optimal design}
The $A$-optimal design for the qubit model is known. 
The Nagaoka bound corresponds to the case of $n=2$,\cite{nagaoka89} 
and Hayashi-Gill-Massar bound is identical to $n=3$.\cite{hayashi97,gm00} 
Yamagata gave a unified treatment for the qubit case as follows.\cite{yamagata11} 
\begin{align}\nonumber
\min_{\e(n)}&\Psi_A[J_{\bm \theta}[\e(n)]]= \frac{1}{n} 
\big(\mathrm{Tr}\{(J_{\bm \theta}^{\sld})^{-1/2}\} \big)^2,\\ \nonumber
\V{\p}_*&=(\p_i) \mbox{ with } \p_i=(\lambda_i^\sld)^{-1/2}
/\sum_j (\lambda_j^\sld)^{-1/2} ,\\
\V{\u}_*&=(\bm{u}_i^\sld),
\end{align}
where $\lambda_i^\sld$ and $\bm{u}_i^\sld$ are the eigenvalues and eigenvectors of the SLD Fisher information matrix $J_{\bm \theta}^{\sld}$.

\subsubsection{$D$-optimal design}
Let us discuss the $D$-optimal design. 
Since the Fisher information matrix $J_{\bm \theta}[\Pi]$ is expressed as in Eq.~\eqref{FisherE3}, 
the minimization of determinant of $J_{\bm \theta}[\e(n)]^{-1}$ is 
equivalent to maximize the value $\prod_{i}^n \p_i$. 
It is straightforward to see that $\V{\p}_*=(1/n,\dots,1/n)$ is the optimal choice 
for the $D$-optimal design, and we have 
\be
\Psi_D(J_{\bm \theta}[\e_*(n)])=n^{-n}\Det{{J_{\bm \theta}^{\sld}}^{-1}}. 
\ee
Furthermore, an optimal set of projection measurements is 
specified by arbitrary set of orthonormal vectors $\V{\u}_*=\{\bm{u}_i\}$ through 
expression \eqref{sld_pvm}.

\subsubsection{$E$-optimal design}
We next give the $E$-optimal design for the qubit model. 
As we remarked earlier, we only consider the full-rank Fisher information. 
Otherwise, any singular design cannot be used to estimate all parameters. 
An optimal measurement is again a set of measurements about the directions of the SLD operators 
as in Theorem \ref{theorem_gamma}. 
This then leads to the following minimization:
\begin{align*} 
\Psi_E(J_{\bm \theta}[\e_*(n)])&=\min_{\bm{\p}}\max\left\{\left(\p_i\lambda(J_{\bm \theta}^\sld)\right)^{-1}\right\}\\
&=\left(\max_{\bm{\p}}\min\left\{\p_i\lambda(J_{\bm \theta}^\sld)\right\}\right)^{-1}.
\end{align*}
The optimal relative frequency for the $E$-optimal design instead takes the form: 
\be
\p^*_i=\frac{(\lambda^\sld_i)^{-1}}{\sum_{j=1}^n (\lambda^\sld_j)^{-1}},
\ee 
where $\lambda^{\sld}_i$ are the eigenvalues of the SLD Fisher information matrix. 
The minimum value of the maximum eigenvalue of the Fisher information matrix is 
given by 
\be
\Psi_E(J_{\bm \theta}[\e_*(n)])=\Tr{{J_{\bm \theta}^{\sld}}^{-1}}. 
\ee

\section{Quantum equivalence theorem for a qubit system}\label{sec4}
In this section, we prove a quantum version of equivalence theorem. 
Combining the results regarding the qubit model yields the following theorem. 
\begin{theorem}
For any qubit model, the following optimization problems are equivalent. 
\begin{align*}
\mathrm{1)}&\quad \min_{\xi\in\Xi}\Det{ J(\xi)^{-1} },\\
\mathrm{2)}&\quad \min_{\xi\in\Xi} \Tr{J_{\bm \theta}^{\sld} J(\xi)^{-1} }. 
\end{align*}
that is the $D$-optimal design coincides with the $A$-optimal design with the weight matrix $J_{\bm \theta}^{\sld}$. 
\end{theorem}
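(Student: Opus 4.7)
The plan is to reduce both optimization problems to the same minimization over the normalized positive matrix $J$ that appears in Lemma \ref{lem1}. By that lemma, every achievable Fisher information matrix for a qubit model has the form $J(\xi)=\sqrt{J_{\bm \theta}^{\sld}}\,J\,\sqrt{J_{\bm \theta}^{\sld}}$ with $J\ge 0$ and $\mathrm{Tr}\{J\}\le 1$, and by Corollary \ref{cor_gm} the minimum of $\Psi$ (for $\Psi$ isotone) is attained on the boundary $\mathrm{Tr}\{J\}=1$. So the strategy is: translate both optimality functions into functionals of $J$ alone, perform the resulting constrained matrix optimization, and verify that the two yield the same optimal $J$ (hence the same optimal Fisher information and the same optimal design).

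First I would handle the $D$-criterion. Substituting the parametrization gives $\Det{J(\xi)^{-1}}=\Det{J_{\bm \theta}^{\sld}}^{-1}\Det{J}^{-1}$, so minimizing over $\xi$ is equivalent to maximizing $\Det{J}$ subject to $\mathrm{Tr}\{J\}=1$, $J\ge 0$. By the AM--GM inequality applied to the eigenvalues of $J$, the maximum is $n^{-n}$ and is attained uniquely at $J=I/n$. This yields the optimal Fisher information $J_{\bm \theta}[\e_*^{D}]=\frac{1}{n}J_{\bm \theta}^{\sld}$, in agreement with the $D$-optimal design derived earlier in Sec.~\ref{sec:qubit}.

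Next I would handle the weighted $A$-criterion with $W=J_{\bm \theta}^{\sld}$. Using the same substitution,
\begin{equation*}
\mathrm{Tr}\{J_{\bm \theta}^{\sld}\,J(\xi)^{-1}\}=\mathrm{Tr}\{J_{\bm \theta}^{\sld}(J_{\bm \theta}^{\sld})^{-1/2}J^{-1}(J_{\bm \theta}^{\sld})^{-1/2}\}=\mathrm{Tr}\{J^{-1}\},
\end{equation*}
so the SLD-dependence drops out and the problem reduces to minimizing $\mathrm{Tr}\{J^{-1}\}$ over the same constraint set. A Lagrange multiplier calculation (or the Cauchy--Schwarz bound $\mathrm{Tr}\{J^{-1}\}\cdot\mathrm{Tr}\{J\}\ge n^{2}$) shows that the minimum equals $n^{2}$ and is attained uniquely at $J=I/n$. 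Hence $J_{\bm \theta}[\e_*^{A,W}]=\frac{1}{n}J_{\bm \theta}^{\sld}$ as well.

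Since both optimization problems are solved by the same $J=I/n$, they share the same optimal Fisher information matrix and therefore the same optimal design measure; concretely, any family $\V{\p}_*=(1/n,\ldots,1/n)$ together with any orthonormal $\V{\u}_*$ used in \eqref{sld_pvm} is simultaneously $D$-optimal and weighted $A$-optimal. I expect the only subtlety is justifying that the trace constraint is saturated at the optimum: this follows from isotonicity of $\Psi_{D}$ and of $\mathrm{Tr}\{W(\cdot)^{-1}\}$, together with Corollary \ref{cor_gm}, and could alternatively be deduced directly from Theorem \ref{theorem_gamma} at $\gamma\to 0$, which gives $J_{\bm \theta}[\e_*]\propto J_{\bm \theta}^{\sld}$ and confirms the coincidence.
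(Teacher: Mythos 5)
Your proof is correct, but it follows a genuinely different route from the paper's. The paper proves the theorem by comparing \emph{optimality conditions}: it invokes the sensitivity-function characterization of Theorem~\ref{thm_equiv2} for both criteria, evaluates the resulting maximizations over $\e$ via Lemma~\ref{lem2}, and shows that in terms of the eigenvalues $j_1\ge\dots\ge j_n$ of $\hat{J}(\xi_*)=(J_{\bm \theta}^{\sld})^{-1/2}J(\xi_*)(J_{\bm \theta}^{\sld})^{-1/2}$ (constrained by $\sum_k j_k\le 1$ from Corollary~\ref{cor_gm}) the $D$-condition $j_n^{-1}=n$ and the weighted-$A$-condition $j_n^{-2}=\sum_k j_k^{-1}$ both force $j_k=1/n$ for all $k$. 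You instead solve both problems outright: using the parametrization of Lemma~\ref{lem1} you reduce each criterion to a functional of the normalized matrix $J$ alone, and AM--GM (resp.\ Cauchy--Schwarz) identifies $J=I/n$ as the unique optimizer in each case. Your approach is more elementary and yields the optimal values and uniqueness of the optimal Fisher information matrix as a by-product, but it leans on the \emph{exact} characterization of the Fisher information region, Eq.~\eqref{qFisherset} — i.e.\ on the achievability of every $J\ge 0$ with $\mathrm{Tr}\{J\}=1$ via the rank-one PVM construction of Sec.~\ref{sec:qubit-FIregion} — whereas the paper's argument works directly with the optimality conditions and is closer in spirit to the Kiefer--Wolfowitz equivalence theorem it is meant to generalize. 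Your handling of the trace-saturation subtlety (scaling/isotonicity) is adequate, and the cross-check against Theorem~\ref{theorem_gamma} at $\gamma\to 0$ is consistent with the $D$-optimal Fisher information $J_{\bm \theta}^{\sld}/n$.
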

\begin{proof}
%
Let us consider an alternative expression of the $D$-optimality function, 
$\Psi_D(J)= \log\big[\Det{J^{-1}}\big]=-\log [\Det{J}] $. 
The sensitivity function is given by $\varphi(\e,\xi)=\Tr{J_{\bm \theta}[\e] J(\xi)^{-1}}$. 
From Theorem \ref{thm_equiv2}, we have the following equivalence:
\[
\xi_*=\min_{\xi\in\Xi}\Psi_D(J(\xi))\Lra \max_{\e\in\cE}\Tr{J_{\bm \theta}[\e] J(\xi_*)^{-1}}= n. 
\]
The maximization problem is solved by Lemma \ref{lem2} to get the condition: 
\be\label{Dopt_cond}
\lambda_{\max}\left( \hat{J}(\xi_*)^{-1} \right)=n, 
\ee
where $ \hat{J}(\xi_*):=(J_{\bm \theta}^{\sld})^{-1/2}J(\xi_*)(J_{\bm \theta}^{\sld})^{-1/2}$. 
Next, from the sensitivity function for the $A$-optimality function 
with the weight matrix $J_{\bm \theta}^\sld$, 
we find $\xi_*$ is $A$-optimal if and only if 
\begin{align} \nonumber
&\max_{\e\in\cE} \Tr{J_{\bm \theta}^\sld J(\xi_*)^{-1} J_{\bm \theta}[\e]J(\xi_*)^{-1}}= \Tr{J_{\bm \theta}^{\sld} J(\xi_*)^{-1}}\\
&\Lra \lambda_{\max}\left( \hat{J}(\xi_*)^{-2} \right)=\Tr{\hat{J}(\xi_*)^{-1}  },
\end{align}
where Lemma \ref{lem2} was used. 
Let $j_1\ge j_2\ge\dots\ge j_n(>0)$ be the eigenvalues of $\hat{J}(\xi_*)$, 
then Corollary \ref{cor_gm} states $\sum_k j_k\le1$. 
With this notation, the $D$-optimality condition is equivalent to $j_n^{-1}=n$. 
This is also equivalent to $j_k=1/n$ for all $k$ due to 
the constraint $\sum_k j_k\le1$. 
Finally, the $A$-optimality condition is expressed as 
$j_n^{-2}=\sum_k j_k^{-1}$. This then implies $j_k=1/n$ for all $k$. 
This completes the proof. 
\end{proof}

\section{Comparison of optimal designs}\label{sec5}
In this section, we compare optimal designs for 
$A$-, $D$-, and $E$-optimality criteria. We denote these optimal designs 
by ${\e}_A$, ${\e}_D$, and ${\e}_E$, respectively. 
As an another reference, we consider the so called the standard tomography ${\e}_{ST}$. 
This is defined by the design $\e_{ST}=(\V{\p}_{ST},\V{\e}_{ST})$ with $\V{\p}_{ST}=(1/3,1/3,1/3)$ and 
$\V{\e}_{ST}=(\Pi^{(1)},\Pi^{(2)},\Pi^{(3)})$. 
Here, $\Pi^{(k)}$ ($k=1,2,3$) are the projection measurements 
about $k$th Pauli matrix $\sigma_k$.  

We first list Fisher information matrices for these designs. 
\begin{align}
J_{A}&:=J_{\bm \theta}[{\e}_A]=\frac{1}{\Tr{{(J_{\bm \theta}^{\sld})}^{-1/2}}} {(J_{\bm \theta}^{\sld})}^{1/2}, \label{JoptA0} \\
J_{D}&:=J_{\bm \theta}[{\e}_D]=\frac{1}{n} J_{\bm \theta}^{\sld},\label{JoptD0}\\
J_{E}&:=J_{\bm \theta}[{\e}_E]=\frac{1}{\Tr{{(J_{\bm \theta}^{\sld}}^{-1})}} \openone_n,\label{JoptE0} \\
J_{ST}&:=J_\theta[{\e}_{ST}]
=\left[\sum_{k=1,2,3} \frac{1}{1-s_{{\bm \theta},k}^2}  \frac{\del s_{{\bm \theta},k}}{\del \theta_i}\frac{\del s_{{\bm \theta},k}}{\del \theta_j} \right], \label{JoptSt0}
\end{align}
where $n$ is the number of parameters and $\openone_n$ denotes the $n\times n$ identity matrix. 
For the Fisher information matrix of the standard tomography, 
we use the Bloch vector representation of the state, $s_{{\bm \theta},j}=\tr{\rho_{{\bm \theta}}\sigma_j}$ 
with $\sigma_j$ ($j=1,2,3$) the Pauli matrices. 
To see the general structure, we also show the Fisher information matrix for the $\gamma$-optimal design ${\e}_\gamma$: 
\begin{equation}\label{Jopt_gamma}
J_{\gamma}:=J_{\bm \theta}[{\e}_\gamma]=\frac{1}{\Tr{\left(J_{\bm \theta}^{\sld}\right)^{-\frac{\gamma}{\gamma+1}}}} 
\left(J_{\bm \theta}^{\sld}\right)^{\frac{1}{\gamma+1}} .
\end{equation}
Let us observe from this result that the structure is very different for each optimal Fisher information matrix. Explicitly, the eigenvalues of $J_{A}$, $J_{D}$, and $J_{E}$ are all different in general. 

As a concrete example, we consider the standard parametrization of the general qubit state with the Stokes parameters. 
Its model is given by
\be\label{qbit_model}
M^Q=\left\{\left.\rho_{\bm \theta}=\frac12\left(\begin{array}{cc}1+\theta_3 & \theta_1-\I\theta_2 \\
\theta_1+\I\theta_2  & 1-\theta_3\end{array}\right)\,\right|\,{\bm \theta}\in\Theta\right\}.
\ee
Here, $\Theta=\left\{{\bm \theta}\in\bbr^3\,\big|\,|{\bm \theta}|^2=\sum_i(\theta_i)^2<1  \right\}$. 
For this model, the SLD Fisher information matrix can be computed, 
and its inverse is 
\[{(J_{\bm \theta}^{\sld})}^{-1}=\openone_3-\bm{\theta}\bm{\theta}^t, \]
where $\bm{\theta}:= (\theta_1,\theta_2,\theta_3)^t$ denotes the column vector. 
We list the inverse matrices of the Fisher information for each optimal design: 
\begin{align} \label{JoptA}
J_{A}^{-1}&=(2+\sqrt{1-|\bm{\theta}|^2}) \left[ \openone_3 +\frac{\sqrt{1-|\bm{\theta}|^2}-1}{|\bm{\theta}|^2} \bm{\theta}\bm{\theta}^t \right], \\
J_{D}^{-1}&=3(\openone_3- \bm{\theta}\bm{\theta}^t ),\label{JoptD} \\
J_{E}^{-1}&= \big(3-|\bm{\theta}|^2\big)\openone_3,\label{JoptE} \\
J_{ST}^{-1}&  = 3\, \mathrm{diag.}\left( 1-(\theta_1)^2,\,1-(\theta_2)^2,\,1-(\theta_3)^2 \right). \label{JoptSt}  
\end{align}

\subsection{$A$-optimality}
Let us consider the $A$-optimality function $\Psi_A(J)=\Tr{J^{-1}}$. 
First, values of the $A$-optimality function for optimal designs are 
\begin{align*}
\Psi_A(J_A)&= \left(  \Tr{(J_{\bm \theta}^{\sld})^{-1/2}}\right)^2,\\
\Psi_A(J_D)&=n  \Tr{(J_{\bm \theta}^{\sld})^{-1}},\\
\Psi_A(J_E)&=n  \Tr{(J_{\bm \theta}^{\sld})^{-1}}=\Psi_A(J_D),
\end{align*}
Here we omit expression for the standard tomography design, since it is rather lengthy. 
From above results, we immediately see that $\e_D$ and $\e_E$ perform exactly same in terms of the $A$-optimality. 

We next consider model \eqref{qbit_model}. 
The results including the standard tomography design are
\begin{align*}
\Psi_A(J_A)&= (2+\sqrt{1-|\bm{\theta}|^2})^2,\\
\Psi_A(J_D)&=\Psi_A(J_E)=\Psi_A(J_{ST})=3 \big(3-|\bm{\theta}|^2\big). 
\end{align*} 
Interestingly, $\Psi_A$ takes the same values for $\e_D,\e_E,\e_{ST}$. 

As a normalized version of these values, we compare the efficiency, $\eta_A[\e]=\Psi_A(J_A)/\Psi_A(J[\e])$, defined in Sec.~\ref{sec:contDoE}. 
By definition, $\eta_A[\e_A]=1$ and others are
\[
\eta_A(\e_D)=\eta_A(\e_E)=\eta_A(\e_{ST})=\frac{(2+\sqrt{1-|\bm{\theta}|^2})^2} {3 \big(3-|\bm{\theta}|^2\big)}. 
\]
In Fig.~\ref{fig1}, we plot efficiency functions $\eta_A[\e_A]=1$ (Black solid curve)
and $\eta_A(\e_D)=\eta_A(\e_E)=\eta_A(\e_{ST})$ (Gray solid curve) as a function of $|{\bm \theta}|^2$. 
It is clear that $\eta_A(\e_D)=\eta_A(\e_E)=\eta_A(\e_{ST})$ is a monotonically 
decreasing function of $|\bm{\theta}|^2=\sum_i(\theta_i)^2$. 
The infimum is given by the pure-state limit $|\bm{\theta}|^2\to1$, 
whose value is $2/3$. For small values of $|\bm{\theta}|^2$, on the other hand, it becomes close to one. 
This means that there is no significant difference among different optimal designs when a state is closed to the completely mixed state. 
\begin{figure}[htbp]
\centering
\includegraphics[width=0.475\textwidth,keepaspectratio,clip]{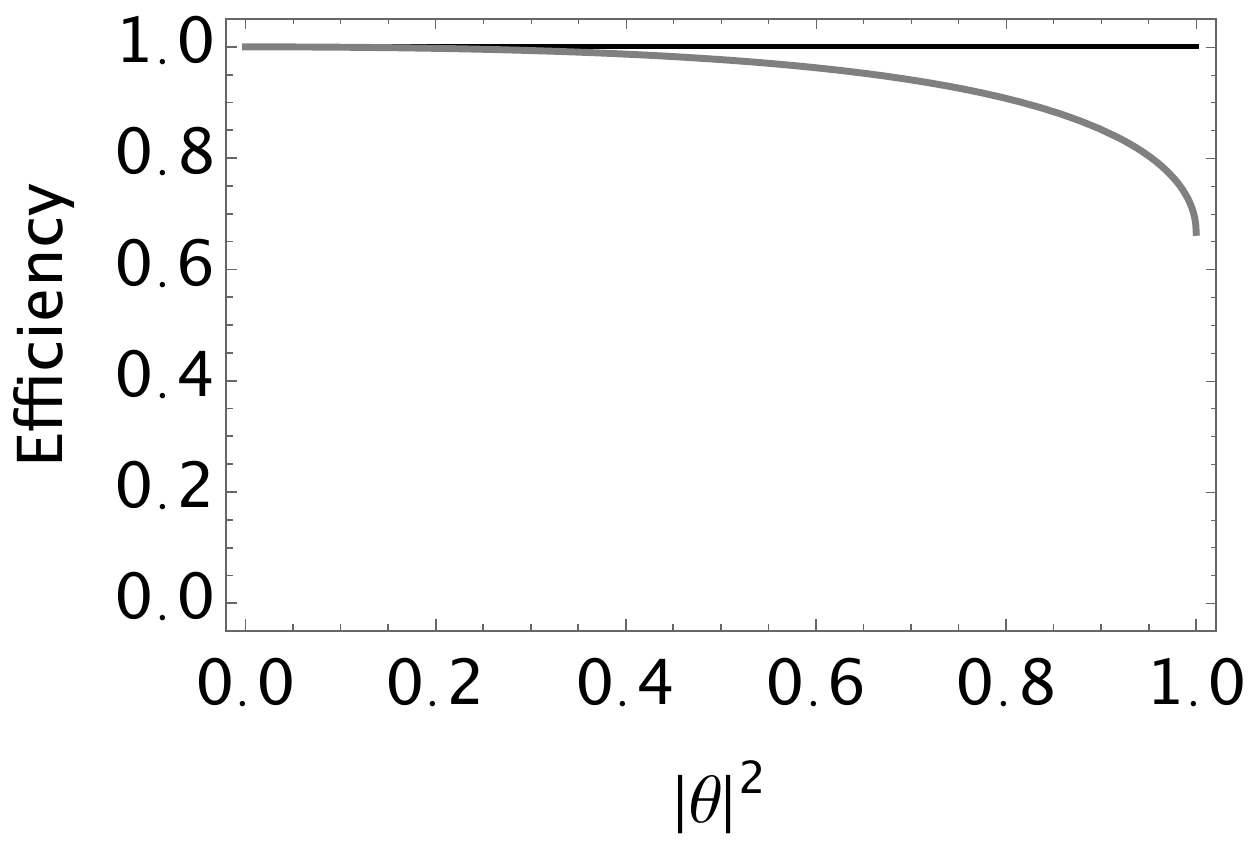}
\label{fig1}
\caption{Efficiency functions for the $A$-optimality criterion.}
\end{figure}

\subsection{$D$-optimality}
Let us consider the $D$-optimality function $\Psi_D(J)=\Det{J^{-1}}$. 
Values of the $D$-optimality function for optimal designs are 
\begin{align*}
\Psi_D(J_A)&= \Tr{(J_{\bm \theta}^{\sld})^{-1/2}} \Det{(J_{\bm \theta}^{\sld})^{-1/2}},\\
\Psi_D(J_D)&=n^n  \Det{(J_{\bm \theta}^{\sld})^{-1}},\\
\Psi_D(J_E)&= \left(\Tr{(J_{\bm \theta}^{\sld})^{-1}}\right)^n. 
\end{align*}

For model \eqref{qbit_model}, the results are
\begin{align*}
\Psi_D(J_A)&= (2+\sqrt{1-|\bm{\theta}|^2})\sqrt{1-|\bm{\theta}|^2} ,\\
\Psi_D(J_D)&=3^3 \big( 1-|\bm{\theta}|^2\big), \\
\Psi_D(J_E)&= \big( 3-|\bm{\theta}|^2\big)^3,\\
\Psi_D(J_{ST})&=3^3 \prod_{k=1,2,3} \big( 1-(\theta_k)^2\big). 
\end{align*} 
Efficiencies are calculated as
\begin{align*}
\eta_D(\e_A)&=\frac{3^3 \sqrt{1-|\bm{\theta}|^2}}{ \left(2+\sqrt{1-|\bm{\theta}|^2}\right)^3},\\
\eta_D(\e_E)&=\frac{3^3 \big(1-|\bm{\theta}|^2\big)}{ \big(3-|\bm{\theta}|^2\big)^3},\\
\eta_D(\e_{ST})&=\big(1-|\bm{\theta}|^2\big) \prod_{k=1,2,3} \big( 1-(\theta_k)^2\big)^{-1}.
\end{align*}
When compared to the $A$-optimal case, the performance of the standard tomography is not rotationally symmetric. 
To be specific, it efficiency $\eta_D(\e_{ST})$ explicitly depends on the direction of the Bloch vector. 

In Fig.~\ref{fig2}, we plot four efficiency functions 
$\eta_D[\e_A]$ (Black solid curve), $\eta_D(\e_D)=1$ (Dotted curve), $\eta_D(\e_E)$ (Dashed curve), $\eta_D(\e_{ST})$ (Gray solid curve) as a function of $|{\bm \theta}|^2$. 
To produce these figures, we fix a particular direction of the Bloch vector given by 
$(\sin\theta_0\cos\phi_0,\sin\theta_0\sin\phi_0,\cos\theta_0)$ and then we change the square of the length $|{\bm \theta}|^2$. 
In the left plot, we choose $\theta_0=\pi/16,\phi_0=\pi/4$. 
Another choice $\theta_0=\pi/4,\phi_0=\pi/4$ is made for the right plot. 

From Fig.~\ref{fig2}, the following relation is expected to hold. 
\[
1=\eta_D(\e_D)\ge\eta_D(\e_A),\eta_D(\e_{ST})\ge\eta_D(\e_E).  
\] 
We now show this ordering. 
The first inequality holds by definition. 
To verify the last inequality, we first show $\eta_D(\e_A)>\eta_D(\e_E)$ 
$\Leftrightarrow$ $\Psi_D(J_A)\le\Psi_D(J_E)$ for all $|{\bm \theta}|^2>0$. 
This can be done by analyzing $\Psi_D(J_E)-\Psi_D(J_A)$ as a function of $|{\bm \theta}|^2$. 
The other relation $\eta_D(\e_{ST})\ge\eta_D(\e_E)$ $\Leftrightarrow$ $\Psi_D(J_{ST})\le\Psi_D(J_E)$ obeys from the inequality of arithmetic and geometric means.  
From Fig.~\ref{fig2}, we see that there is no ordering between 
$\eta_D(\e_A)$ and $\eta_D(\e_{ST})$. 

Next, we note that $\eta_D(\e_A),\eta_D(\e_E),\eta_D(\e_{ST})$ become zero as 
$|\bm{\theta}|^2$ approaches one (The pure-state limit). 
This indicates that designs $\e_A, \e_E, \e_{ST}$ become completely useless in terms of $D$-optimality. 
We elaborate on this in Sec.~\ref{sec:discussion}. 

\begin{figure}[htbp]
\centering
\includegraphics[width=0.475\textwidth,keepaspectratio,clip]{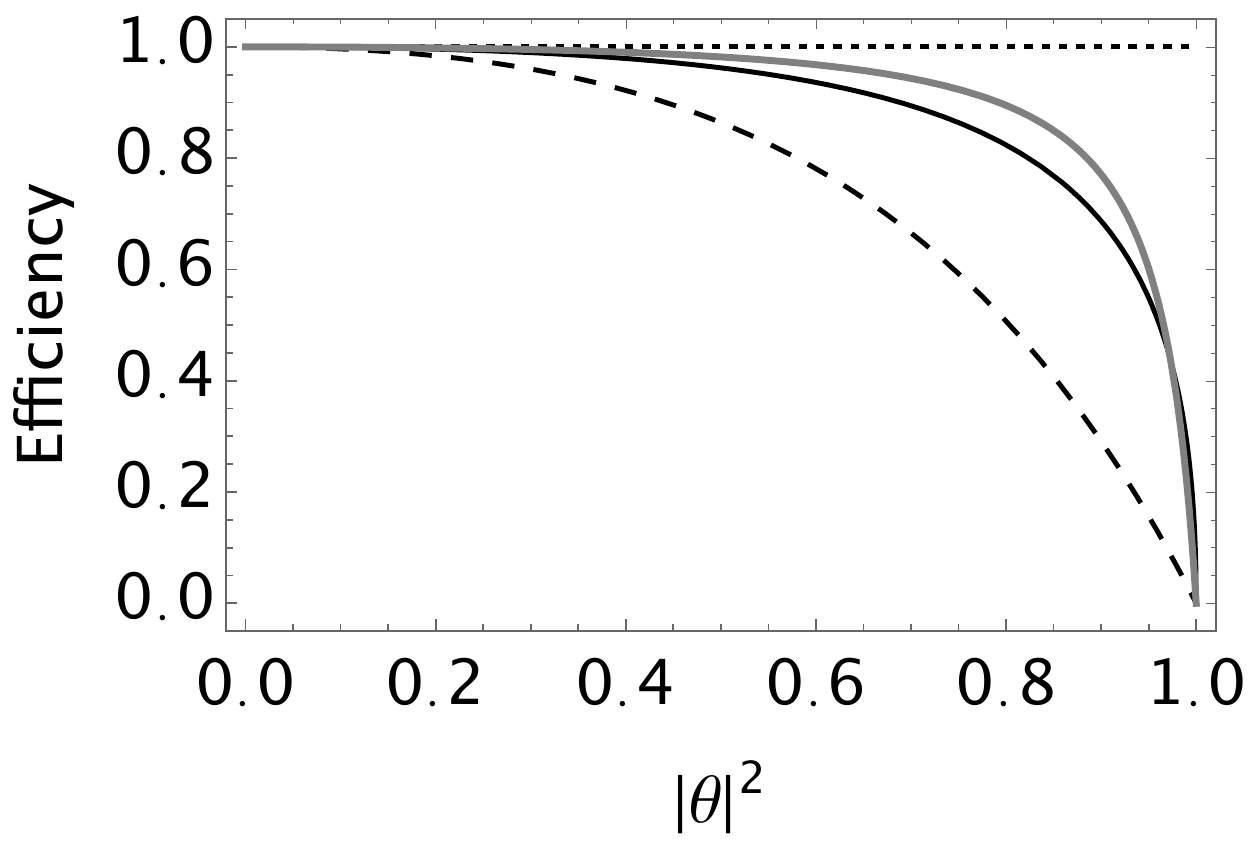}
\includegraphics[width=0.475\textwidth,keepaspectratio,clip]{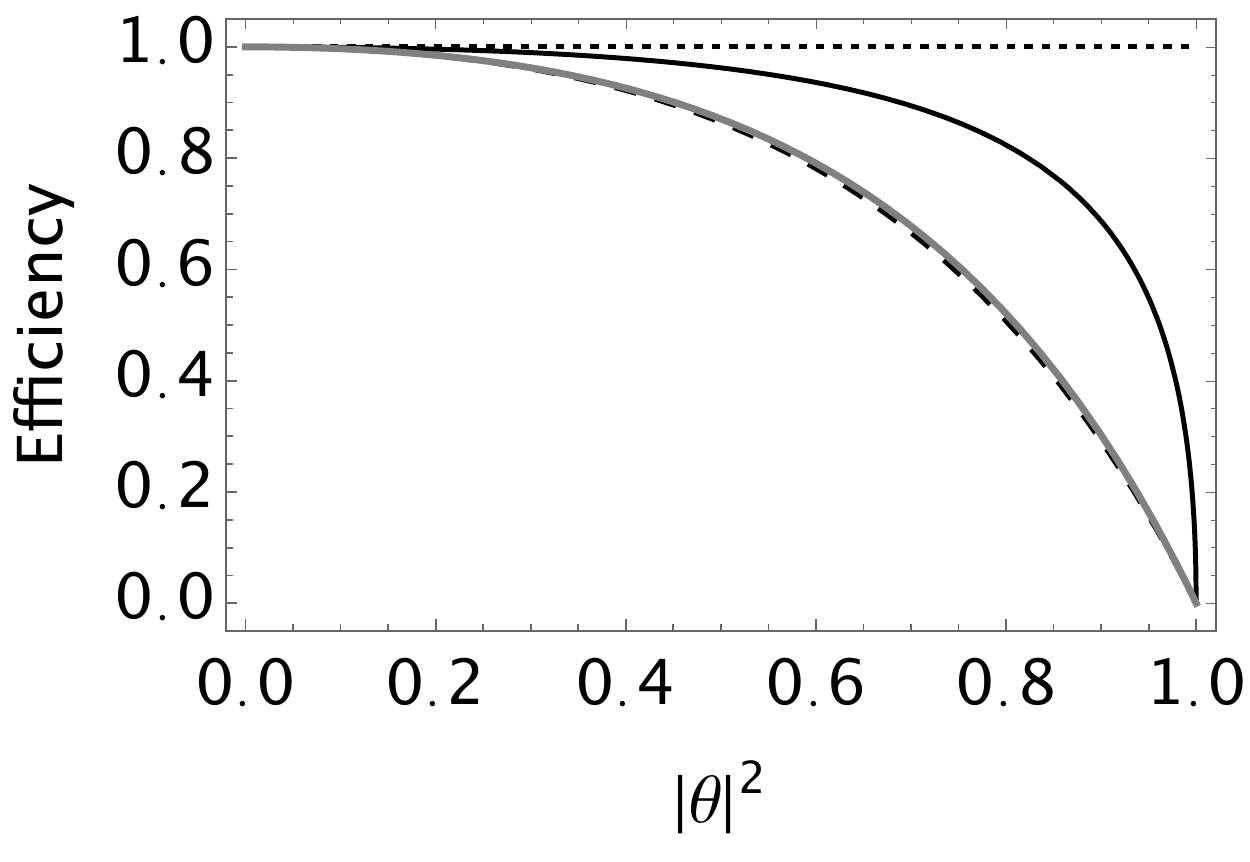}
\label{fig2}
\caption{Efficiency functions for the $D$-optimality criterion.}
\end{figure}


\subsection{$E$-optimality}
Let us consider the $E$-optimality function $\Psi_E(J)= \lambda_{\max}\{J^{-1}\}$. 
Values of the $E$-optimality function for optimal designs are 
\begin{align*}
\Psi_E(J_A)&= \Tr{(J_{\bm \theta}^{\sld})^{-1/2}} \lambda_{max}\{(J_{\bm \theta}^{\sld})^{-1/2}\},\\
\Psi_E(J_D)&=n  \lambda_{\max}\{(J_{\bm \theta}^{\sld})^{-1}\},\\
\Psi_E(J_E)&= \Tr{(J_{\bm \theta}^{\sld})^{-1}}. 
\end{align*}

For model \eqref{qbit_model}, we have
\begin{align*}
\Psi_E(J_A)&= 2+ \sqrt{1-|\bm{\theta}|^2},\\
\Psi_E(J_D)&=3,\\
\Psi_E(J_E)&=3-|\bm{\theta}|^2,\\
\Psi_E(J_{ST})&=3 (1-\min\{(\theta_i)^2\}). 
\end{align*}
Efficiencies are obtained as
\begin{align*}
\eta_E(J_A)&=\frac{3-|\bm{\theta}|^2}{ 2+\sqrt{1-|\bm{\theta}|^2}} ,\\
\eta_E(J_D)&=\frac13\big( 3-|\bm{\theta}|^2\big),\\
\eta_E(J_{ST})&= \frac{3-|\bm{\theta}|^2}{3 (1-\min\{(\theta_i)^2\})} . 
\end{align*} 

\begin{figure}[htbp]
\centering
\includegraphics[width=0.475\textwidth,keepaspectratio,clip]{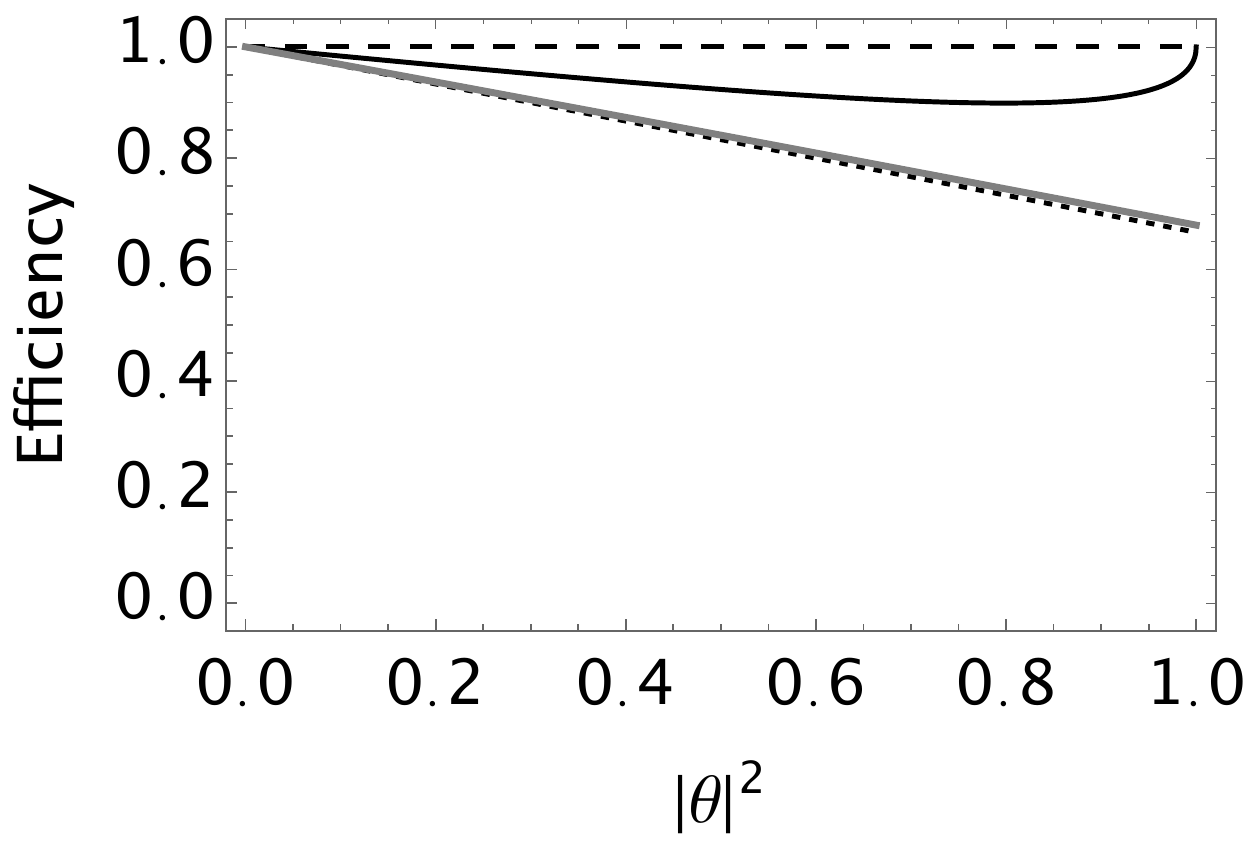}
\includegraphics[width=0.475\textwidth,keepaspectratio,clip]{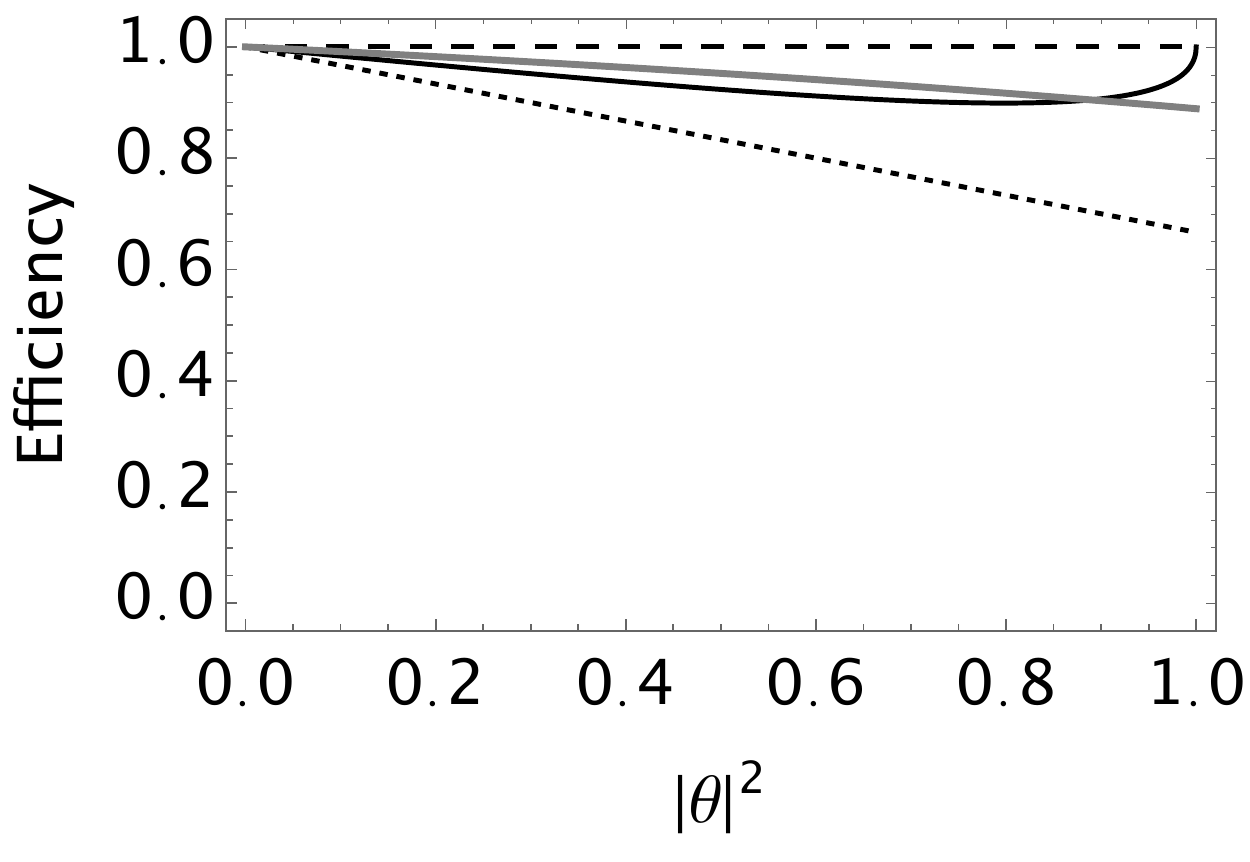}
\label{fig3}
\caption{Efficiency functions for the $E$-optimality criterion.}
\end{figure}

In Fig.~\ref{fig3}, we plot four efficiency functions 
$\eta_E[\e_A]$ (Black solid curve), $\eta_E(\e_D)$ (Dotted curve), $\eta_E(\e_E)=1$ (Dashed curve), 
$\eta_E(\e_{ST})$ (Gray solid curve) as a function of $|{\bm \theta}|^2$. 
As in Fig.~\ref{fig2}, we choose two particular directions of the Bloch vector: $\theta_0=\pi/16,\phi_0=\pi/4$ for the right plot and $\theta_0=\pi/4,\phi_0=\pi/4$ for the right plot. 
For efficiency about the $E$-optimality, we have the following ordering relation:
\[
1=\eta_E(\e_E)\ge\eta_E(\e_A),\eta_E(\e_{ST}) \ge \eta_E(\e_D).  
\] 
The relation $\Psi_E(J_A)\le  \Psi_E(J_D)$ can be shown by a straightforward exercise. 
The other inequality $\Psi_E(J_{ST})\le  \Psi_E(J_D)$ also holds trivially. 
Figure \ref{fig3} shows that there is no ordering between 
$\eta_E(\e_A)$ and $\eta_E(\e_{ST})$. 

Another interesting characteristic is that 
$\eta_E(\e_A)$ becomes one as $|\bm{\theta}|^2$ approaches one. 
Also, $\eta_E(\e_D)$ and $\eta_E(\e_{ST})$ do not vanish at the boundary $|\bm{\theta}|^2=1$. 

\subsection{Discussions}\label{sec:discussion}
The first observation in our study is that each optimality criterion defines a different optimal design, whose characteristics can be very different. 
Although this is clear, we explicitly demonstrate this fact for the popular optimality criteria. 
This point is illustrated by expressions for the Fisher information matrices 
\eqref{JoptA0}, \eqref{JoptD0}, \eqref{JoptE0}, and \eqref{Jopt_gamma}. 
In the following, we list more specific findings. 

The Fisher information matrix  for the design for the standard tomography \eqref{JoptSt} shows asymmetry in the Bloch vector representation. 
However, it becomes rotationally invariant for the $A$-optimality function by taking a trace with a unit weight matrix. This point was also demonstrated by Yamagata\cite{yamagata11} deriving the necessary and sufficient condition for the weight matrix such that the standard tomography coincides with the $A$-optimal design. See also a related work \cite{PhysRevA.90.012115} 
When the standard tomography is evaluated for other optimality criteria, we see that it is not the worst design among other optimal designs. 
This might be another justification of adopting the standard tomography in practice.  

Next, we analyze $A$-optimal design. 
By evaluating efficiency functions for other optimality criteria, 
we see that it is relatively stable. 
In particular, it behaves well for the $E$-optimality when compared with the $D$-optimal design and the standard tomography. 
One of the reasons behind this observation is that $A$-optimality with a unit weight matrix optimizes three parameters equal footings. 
Thus, we expect that it should perform well on average. 

The $D$-optimal design is known to be one of the most popular criteria in the classical theory of optimal DoE. 
However, its applicability in the quantum case needs further justification. 
Figure 2 shows that other optimal designs as well as the standard tomography become less efficient when the model becomes pure. 
This is because this optimal criterion concerns the product of eigenvalues of the Fisher information matrix, and thus it is sensitive to the small numbers. 
In particular, the $D$-optimality function $\Psi_D(J_D)$ for the $D$-optimal design 
vanishes in the pure-state limit. Therefore, efficiency function $\eta_D$ also vanishes unless $\Psi_D(J)$ cancels each other. 
The classical Fisher information matrix of this $D$-optimal design \eqref{JoptD} is proportional to the SLD Fisher information matrix. In literature\cite{LFGKC,ZH18}, the existence of such POVMs is not trivial in general, and it has been the subject known as the Fisher-symmetric informationally complete measurement. Our result on the $D$-optimal design is thus related. 
From this observation, $D$-optimality for the higher dimensional case is worth a further study. 

Last, let us make a brief comment on the $E$-optimal design. 
This optimality is related to the philosophy of the min-max strategy: 
One tries to avoid the worst case value of the MSE matrix. 
Interestingly, the classical Fisher information matrix for the $E$-optimal design is proportional to the identity matrix as seen in Eq.~\eqref{JoptE0}. 
This result exhibits the maximum symmetry for the Fisher information matrix. 
This optimal design is not so common in the quantum domain so far. 
It should play an important role when one wishes to guarantee the best 
estimate for the smallest eigenvalue of the Fisher information matrix.  

\section{Summary and outlook}\label{sec6}
In summary, we have formulated the problem of quantum-state estimation 
problem in the framework of optimal design of experiments (DoE). 
This formulation shows that the problem at hand is a usual statistical optimization problem 
except for the fact that quantities are represented by non-negative complex matrices. 
We have solved the qubit case analytically deriving popular optimal designs. 
A quantum version of the equivalence theorem is also proven in the qubit case. 
Another important finding of this paper is a comparison among the popular optimal designs: $A$-, $D$-, and $E$-optimal designs. 
In particular, we have shown that some of the optimal designs do not perform well 
for the other choice of optimality criterion. Although this is likely to happen in general, 
we have explicitly demonstrated it for the standard parametrization of qubit states. 

An important future work is to apply our formulation to various physically important 
problems and to find a good experimental setup by solving the optimization problem numerically. 
There are several open problems along the line of this research. 
First, to develop an efficient optimization algorithm for finding an optimal design. 
Second, generalization of the equivalence theorem to higher dimensional systems. 
Third, the singular design problem that is common in finding an optimal design in the presence of nuisance parameters.\cite{suzuki2020nuisance,tsang2020semipara,suzuki2020quantum} 
Classical theory of optimal DoE is a rich and mature subject in classical statistics. 
There are many unexplored subjects of DoE in the quantum case, 
which would be of great importance in any quantum information processing, 
such as a sequential design, block design, Bayesian design, minimax design, robust design, model discrimination, to list a few.

\section*{Acknowledgement}
The work is partly supported by JSPS KAKENHI Grant Number JP17K05571 and the FY2020 UEC Research Support Program, the University of Electro-Communications. 
He would like to thank Prof.~Hui Khoon Ng for invaluable discussions and her kind hospitality at Centre for Quantum Technologies in Singapore where part of this work was done.

\end{document}